\newif \ifgenerate
\newif \ifcache
\def\baseimagedir{}
\def\tikzpath{\baseimagedir tikz/}
\def\tikzpathexternal{}
\providecommand{\Symbol}[1]{\textcolor{blue}{#1}}
\definecolor{FCSA}{RGB}{141,211,199}
\definecolor{FCSB}{RGB}{255,255,179}
\definecolor{RCSC}{RGB}{185,138,196}
\definecolor{RCSD}{RGB}{231,143,111}
\definecolor{RCSE}{RGB}{128,177,211}
\theoremstyle{plain}	\newtheorem{Lem}{Lemma}
\theoremstyle{plain} 	\newtheorem{Cor}{Corollary}
\theoremstyle{plain} 	\newtheorem{The}{Theorem}
\theoremstyle{plain} 	
\theoremstyle{plain} 	
\theoremstyle{plain}	
\theoremstyle{plain}	\newtheorem{Def}{Definition}
\theoremstyle{plain}	
\renewenvironment{proof}[1][\proofname]{%
\par\pushQED{\qed}\normalfont%
\topsep6\p@\@plus6\p@\relax\trivlist%
\item[\hskip\labelsep\bfseries#1\@addpunct{.}]\itshape\ignorespaces}{%
\popQED\endtrivlist\@endpefalse}%
\def\clap#1{\hbox to 0pt{\hss#1\hss}}
\def\mathclap{\mathpalette\mathclapinternal}
\def\mathclapinternal#1#2{%
\clap{$\mathsurround=0pt#1{#2}$}}
\newcommand{\emphasis}[1] {\emph{#1}}
\newcommand{\vocab}[1] {\emph{#1}}
\newcommand{\eM}     {\mbox{$\epsilon$-machine}}
\newcommand{\eMs}    {\mbox{$\epsilon$-machines}}
\newcommand{\EM}     {\mbox{$\epsilon$-Machine}}
\newcommand{\EMs}    {\mbox{$\epsilon$-Machines}}
\newcommand{\Process}{\mathcal{P}}
\newcommand{\MeasAlphabet}	{\mathcal{A}}
\newcommand{\MeasSymbol}   { {X} }
\newcommand{\meassymbol}   { {x} }
\newcommand{\past}	{ \smash{\overleftarrow {\meassymbol}} }
\newcommand{\future}	{ \smash{\overrightarrow{\meassymbol}} }
\newcommand{\CausalState}	{ \mathcal{S} }
\newcommand{\causalstate}	{ \sigma }
\newcommand{\CausalStateSet}	{ \boldsymbol{\CausalState} }
\newcommand{\AlternateState}	{ R }
\newcommand{\AlternateStateSet}	{ \mathcal{R} }
\newcommand{\Prob}      {\Pr} 
\newcommand{\Cmu}		{C_\mu}
\newcommand{\hmu}		{h_\mu}
\newcommand{\EE}		{{\bf E}}
\newcommand{\PC}		{\chi}
\newcommand{\forward}{+}
\newcommand{\reverse}{-}
\newcommand{\forwardreverse}{\pm} 
\newcommand{\ForwardDFA}{D^+}
\newcommand{\ReverseDFA}{D^-}
\newcommand{\FutureCausalState}	{ {\CausalState}^{\forward} }
\newcommand{\futurecausalstate}	{ \sigma^{\forward} }
\newcommand{\PastCausalState}	{ {\CausalState}^{\reverse} }
\newcommand{\ForwardCausalState}	{ {\CausalState}^{\forward} }
\newcommand{\forwardcausalstate}	{ \sigma^{\forward} }
\newcommand{\ReverseCausalState}	{ {\CausalState}^{\reverse} }
\newcommand{\reversecausalstate}	{ \sigma^{\reverse} }
\newcommand{\BiCausalState}		{ {\CausalState}^{\forwardreverse} }
\newcommand{\ForwardCausalStateSet}	{ {\CausalStateSet}^{\forward} }
\newcommand{\ReverseCausalStateSet}	{ {\CausalStateSet}^{\reverse} }
\newcommand{\BiCausalStateSet}	{ {\CausalStateSet}^{\forwardreverse} }
\newcommand{\eMachine}	{ M }
\newcommand{\FutureEM}	{ {\eMachine}^{\forward} }
\newcommand{\ForwardEM}	{ {\eMachine}^{\forward} }
\newcommand{\ReverseEM}	{ {\eMachine}^{\reverse} }
\newcommand{\FutureCmu}	{ C_\mu^{\forward} }
\newcommand{\PastCmu}	{ C_\mu^{\reverse} }
\newcommand{\ForwardCmu}	{ C_\mu^{\forward} }
\newcommand{\ReverseCmu}	{ C_\mu^{\reverse} }
\newcommand{\one}{\mathbf{1}}
\newcommand{\lastindex}[2]{
  \edef\tempa{0}
  \edef\tempb{#2}
  \ifx\tempa\tempb
    \edef\tempc{#1}
  \else
    \edef\tempa{0}
    \edef\tempb{#1}
    \ifx\tempa\tempb
      \edef\tempc{#2}
    \else
      \edef\tempc{#1+#2}
    \fi
  \fi
  \tempc
}
\newcommand{\GI}{\varphi}
\newcommand{\OI}{\zeta}
\newcommand{\ORDER}{\text{order\nobreakdash-}}
\newcommand{\MOrder}{\ensuremath{R}}
\newcommand{\CSjoint}[1][,]{
   \edef\tempa{:}
   \edef\tempb{#1}
   \ifx\tempa\tempb
      \ensuremath{\FutureCausalState\!#1\PastCausalState}
   \else
      \ensuremath{\FutureCausalState#1\PastCausalState}
   \fi
}
\newif\ifpm 
\edef\tempa{\forwardreverse}
\edef\tempb{\pm}
\renewcommand{\Pr}{\mathbb{P}}
\newcommand{\Measure} {\mu}
\begin{document}

\title{Information Symmetries in Irreversible Processes}

\author{Christopher J. Ellison}
\email{cellison@cse.ucdavis.edu}
\affiliation{Complexity Sciences Center\\
Physics Department, University of California at Davis,\\
One Shields Avenue, Davis, CA 95616}

\author{John R. Mahoney}
\email{jmahoney3@ucmerced.edu}
\affiliation{School of Natural Sciences,
University of California at Merced,\\
5200 North Lake Road, Merced, CA 95343}

\author{Ryan G. James}
\email{rgjames@ucdavis.edu}
\affiliation{Complexity Sciences Center\\
Physics Department, University of California at Davis,\\
One Shields Avenue, Davis, CA 95616}

\author{James P. Crutchfield}
\email{chaos@cse.ucdavis.edu}
\affiliation{Complexity Sciences Center\\
Physics Department, University of California at Davis,\\
One Shields Avenue, Davis, CA 95616}
\affiliation{Santa Fe Institute,
1399 Hyde Park Road, Santa Fe, NM 87501}

\author{J\"{o}rg Reichardt}
\email{jreichardt@ucdavis.edu}
\affiliation{Complexity Sciences Center\\
Physics Department, University of California at Davis,\\
One Shields Avenue, Davis, CA 95616}

\date{\today}
\bibliographystyle{unsrt}

\def\ourAbstract{%
We study dynamical reversibility in stationary stochastic processes
from an information theoretic perspective. Extending earlier work on the
reversibility of Markov chains, we focus on finitary processes with arbitrarily
long conditional correlations. In particular, we examine stationary processes
represented or generated by edge-emitting, finite-state hidden Markov models.
Surprisingly, we find pervasive temporal asymmetries in the statistics of such
stationary processes with the consequence that the computational resources
necessary to generate a process in the forward and reverse temporal directions
are generally not the same. In fact, an exhaustive survey indicates that most
stationary processes are irreversible.  We study the ensuing relations between
model topology in different representations, the process's statistical
properties, and its reversibility in detail. A process's temporal asymmetry is
efficiently captured using two canonical unifilar representations of the
generating model, the forward-time and reverse-time
\texorpdfstring{\eMs}{epsilon-machines}. We analyze example irreversible
processes whose \texorpdfstring{\eM}{epsilon-machine} presentations change
size under time reversal, including one which has a finite number of
recurrent causal states in one direction, but an infinite number in the
opposite. From the forward-time and reverse-time
\texorpdfstring{\eMs}{epsilon-machines}, we are able to construct a symmetrized,
but nonunifilar, generator of a process---the bidirectional machine.  Using
the bidirectional machine, we show how to directly calculate a process's
fundamental information properties, many of which are otherwise only poorly
approximated via process samples.
The tools we introduce and the insights we offer provide a better understanding
of the many facets of reversibility and irreversibility in stochastic processes.
}

\def\ourKeywords{%
  stochastic process, reversibility, irreversibility, hidden Markov model,
  Markov chain, information diagram, presentation, bidirectional machine,
  \texorpdfstring{\eM}{epsilon-machine}
}

\hypersetup{
  pdfauthor={C. J. Ellison, J. R. Mahoney, R. G. James,
             J. P. Crutchfield, and J. Reichardt},
  pdftitle={Information Symmetries in Irreversible Processes},
  pdfsubject={\ourAbstract},
  pdfkeywords={\ourKeywords},
  pdfproducer={},
  pdfcreator={}
}

\begin{abstract}
\ourAbstract

\vspace{0.1in}
\noindent
{\bf Keywords}: \ourKeywords

\end{abstract}

\pacs{
02.50.-r  
89.70.+c  
05.45.Tp  
02.50.Ey  
02.50.Ga  
05.45.-a  
}
\preprint{Santa Fe Institute Working Paper 11-06-XXX}
\preprint{arxiv.org:1106.XXXX [physics.gen-ph]}

\maketitle
\


\setstretch{1.1}

\vspace{0.3in}

{\bf One of the principal early mysteries of thermodynamics was the origin of
  irreversibility: While microscopic equations of motion describe behaviors that
  are the same in both time directions, why do large-scale systems exhibit
  temporal asymmetries? Many thermodynamic processes go in one direction: Closed
  systems devolve from order to disorder, heat flows from high temperature to
  low temperature, and shattered glass does not reassemble itself spontaneously.
  These are described as transient relaxation processes in which a system moves
  from one macroscopic state to another with high probability since there is
  an overwhelming number of microscopic configurations that realize the
  eventual state.
  
  Here, we analyze a generalized notion of irreversibility:
  Behavior in reverse time gives rise to a different stochastic process than
  that in forward time. This \emph{dynamical irreversibility} subsumes
  relaxation, but is not so constrained, since it can occur in a
  nonequilibrium steady state. A dynamical parallel to the shattered glass
  example of transient relaxation is found in a ``continuous-flow" glass
  grinder: Continuously fed whole glass, the grinder eventually produces
  glass pieces that are sufficiently small to pass out via a sieve. After
  a transient start-up time, the distribution of glass sizes settles down to
  a steady state. The glass grinding process is dynamically irreversible.

  We explore irreversibility in stationary stochastic systems using new tools from
  information theory and computational mechanics. We show that a system's causal
  structure and information storage depend on time's arrow, while its rate of
  generating information does not. We develop a time-symmetric
  representation---the bi-directional machine---that allows one to directly
  determine key informational and computational properties, including how much
  stored information is hidden from observation, the number of excess
  statistical degrees of freedom, the amount of internal information that
  anticipates future behavior, and the like. We summarize the analysis via a
  new irreversibility classification scheme for stochastic processes. Overall,
  the result is an enriched view of irreversibility and its companion
  properties---a view that enhances our understanding of the relationship
  between energy and information and of the structure of the physical
  substrates that carry them.
} 

\section{Introduction}

Dynamical systems, by definition, evolve in time. Practically all of what we
may know about a system is derived from careful observation of its change in
time. In their attempt to understand underlying mechanisms, physicists cast
observations in the language of mathematics, spelling out
``equations of motion'' to model how a system's temporal behavior
arises from the forces acting on it. In some settings, such modeling
allows for forecasting a system's behavior given its current state,
but also allows for tracing its evolution backward in time.

The equations of motion of classical mechanics meet this ideal;
they are \vocab{dynamically reversible}. From current observations of
the sky, we are able to precisely determine planet motions
hundreds of years into the past and future; we can
determine the future course of meteorites, but also where they
came from. This dynamical reversibility is tied to the fact that
the mechanical equations of motion provide an invertible one-to-one
mapping of a system's current state to its future state; that is,
they specify a deterministic dynamic. Given a mechanical system's
current state at a single instant, Laplace's Daemon, in principle, can
predict the system's entire future and entire past \cite{Lapl52}.

In practice, the limited precision to which we can
specify initial conditions and the often high sensitivity of the
equations of motion with respect to changes of the initial
conditions restrict our ability to predict a system's future or
to reconstruct its history over a long period of time.

Acknowledging this fundamental limitation, statistical mechanics introduced
the distinction between a system's macroscopic state and its microscopic state.
For example, the precise momenta and positions of the particles in a gas
container form the system's microstate, whose behavior is governed by
deterministic, reversible dynamics. Only averages over these microstates are
accessible to us, though, being measured as pressure, temperature, volume, and
the like. These thermodynamic-state variables in turn describe the system's
macrostate and their interdependence is given by the thermodynamic equations of
state. Interestingly, once a thermodynamic system reaches equilibrium---and
only then are the thermodynamic-state variables defined and the state
equations valid---we have no way to discover the system's past.
That is, we cannot know how the system reached this thermodynamic state since
it is not possible to trace back the system's
evolution from its current state: Its macroscopic dynamics are irreversible
\footnote{Note that thermodynamics does speak of reversible macroscopic
processes. Consider, for example, a common thermodynamic process: the isobaric
expansion of a gas from macrostate $A$ with volume $V_A$,
temperature $T_A$, and pressure $p$ to macrostate $B$ with volume $V_B>V_A$
and temperature $T_B>T_B$ at pressure $p$. This process is called reversible,
if there exists a way of manipulating the gas back to macrostate $A$,
once it is found in macrostate $B$, such as by cooling it. This notion of
thermodynamic reversibility, however, differs from our notion of
reversibility which focuses on the ability to reconstruct a history from
observations.}. Equilibrium thermodynamics then leaves us with a
description of a system in terms of thermodynamic macrostates
that are entirely devoid of traces of the system's past and that are
trivial with respect to the system's further evolution.

Between the extremes of deterministic mechanical systems with their complete
reversibility and thermodynamic systems that do not admit reconstructing the
system's past, we find stochastic processes. Stochastic processes exhibit
nontrivial, nondeterministic dynamical evolution that combines the ability to
reconstruct historical evolution and to forecast future behavior in a
probabilistic setting.

Here, developing an information-theoretic perspective, we study the
reversibility of stochastic processes; specifically, our ability to make
assertions about a process's past from current and future observations. We
contrast the act of reconstructing a process's past based on current and future
observations (retrodiction) with that of forecasting a process's future based
on past and current observations (prediction). We show the two tasks exhibit
a number of unexpected and nontrivial asymmetries. In particular, we show
that in contrast to deterministic dynamical systems, where the forward and
reverse evolution can be computed at the same computational cost---solving a
differential equation---predicting and retrodicting a stochastic process's
evolution may come at very different computational costs. More precisely, we
show that the canonical generators of stochastic processes, their ``equations
of motion'' so to speak, are generally far from invariant under time reversal.
Via an exhaustive survey, we demonstrate that irreversibility is an
overwhelmingly dominant property of structurally complex stochastic processes.
This asymmetry shows that depicting processes only by either their
forward or reverse generators typically does not provide a complete
description. This leads us to introduce a time-symmetric representation of a
stochastic process that allows a direction calculation of key
informational and computational quantities associated with the process's
evolution in forward and backward time directions. With these tools at hand, we
are then able to establish a novel classification of stochastic
processes in terms of their reversibility, providing new
insights into the diversity of information processing embedded in
physical systems.

Of fundamental importance for our discussion is the notion of the ``state''
of a probabilistic process and the use of state-based models---the so-called
generators---to describe stochastic processes. We introduce these in
Section~\ref{sec:ProcStateGen} . Continuing in more familiar territory,
Section~\ref{sec:ObservableStates} reviews reversibility in processes whose
generators have states which can be directly observed---the so-called Markov
chains. Section~\ref{sec:hmm} expands the discussion to a broader class of
models, the hidden Markov models (HMMs), whose states cannot be directly
observed. There, we utilize the information measures from Ref.~\cite{Crut10a} to
describe ways in which a process hides internal structure from observations.
Then we draw out the differences between models of processes with and without
observable states. In this, we confront the issue of process structure. This
leads Sec.~\ref{sec:structure} to introduce a canonical representation for
each process---the \eM. At this point, irreversibility of HMMs becomes
necessarily tied to properties of the \eM. There, we introduce the \eM\
information diagram which is a useful roadmap for the various information
measures and corresponding process properties. A number of example processes
are analyzed to help ground the concepts introduced up to this point. A new
presentation is required to go further, however, and Sec. \ref{sec:bieM}
introduces and analyzes a process's bidirectional machine using Ref. \cite{Crut10a}'s information measures. Finally, we conclude by drawing out the
thermodynamic implications for these notions of irreversibility and commenting
on its role in applications.

\section{Processes and Generators}
\label{sec:ProcStateGen}

To keep our analysis of irreversibility constructive, our focus
here is on discrete-time, discrete-valued stationary processes and
their various alternate representations. This class includes the symbolic
dynamics of chaotic dynamical systems, one-dimensional spin chains, and
cellular automata spatial configurations, to mention three well-known,
complex applications.
Historically, one-dimensional stochastic processes were studied using
\vocab{generators}---models that reproduce the process's statistics in a
time-ordered sequence. The tradition of using generators is so strong that
their time-order is often treated as synonymous with the process's time-order
which, as the following will remind the reader, need not exist. Much of the following
requires that we loosen the seemingly natural assumption of time-order.

To begin, we define processes strictly in terms of probability
spaces~\cite{Uppe97a}. Consider the space $\MeasAlphabet^\mathbb{Z}$ of
bi-infinite sequences consisting of symbols from $\MeasAlphabet$, a finite
set known as the \vocab{alphabet}.
Taking $\mathbb{X}$ to be the $\sigma$-field generated by the
\vocab{cylinder sets} of $\MeasAlphabet^\mathbb{Z}$, we assign probabilities
to sets in $\mathbb{X}$ via a measure $\Measure$. The $3$-tuple
$(\MeasAlphabet^\mathbb{Z}, \mathbb{X}, \Measure)$ is a
\vocab{probability space} that we refer to as a \vocab{process},
denoting it $\Process$.

Let $\MeasSymbol_i$ denote the random variable that describes the outcomes
at index $i$. As a convenient shorthand~\footnote{This is equivalent to index
notation in the Python programming language.}, we denote random variable blocks
as $\MeasSymbol_{i:j} = \MeasSymbol_i \MeasSymbol_{i+1} \cdots
\MeasSymbol_{j-1}, j \geq i$. When $j = i$, the block has length zero
and this is used to keep definitions simple.

For example, consider a process with alphabet $\MeasAlphabet = \{a,b,c\}$ for
which the word $w=abc$ has the corresponding cylinder set
$\{ x \in \MeasAlphabet^\mathbb{Z} | X_0=a, X_1=b, X_2=c \}$.
The probability of $w$ is defined to be the probability of its
cylinder set in $\mathbb{X}$:
\begin{align*}
  \Pr(X_{0:3} = w) &= \Pr(X_0=a, X_1=b, X_2=c) \\
                   &= \Measure \left(\{ x \in \MeasAlphabet^\mathbb{Z} |
                               X_0=a, X_1=b, X_2=c \}\right).
\end{align*}

Notice that \emph{time} does not appear explicitly in the definition of a
process as a probability space. Indeed, the indexing of $X_i$ can refer, for
example, to locations on a spatial lattice.

While one need not interpret a process in terms of time,
temporal interpretations are often convenient.
The random variable block leading up to ``time'' $t$ is referred to as the
\vocab{past} and denoted $X_{:t} \equiv \ldots X_{t-3} X_{t-2} X_{t-1}$.
Everything from $t$ onward is referred to as the \emph{future} and denoted
$X_{t:} \equiv X_t X_{t+1} X_{t+2} \ldots$. We restrict ourselves to
\vocab{stationary} processes by demanding that $\mathbb{P}$
yield the same probabilities for blocks whose indices are shifts of one
another: $\Pr(X_{0:L}) = \Pr(X_{t:t+L})$ for all $t$ and $L$. When considering
generative models, we work with a semi-infinite sequence of
random variables, but due to stationarity, the distribution can be
uniquely extended to a probability distribution over bi-infinite
sequences~\cite{Uppe97a}.

Generators are dynamical systems and so \emph{time}, as a concept, is
fundamental. That being said, there are two natural and, generally, distinct
ways of generating a process. When the time order of the generator coincides
with the process's index, which increases (a priori) left-to-right, the model is a
\vocab{forward generator} of the process.  When its time order is the opposite
of the process's index, the model is a \vocab{reverse generator}.

Given a process, if we isolate a block of symbols, that block's probability
is the same using the forward and reverse generators. The only difference
is in how the block indices are interpreted~\footnote{This is yet another
reminder that probability \emph{alone} cannot determine
causality~\cite{Pearl2009a}.}. On occasion, it will be helpful to consider a
random variable whose index increases when scanning right-to-left, as this
corresponds to increasing time from a reverse generator's frame of reference.
Such random variables will be decorated with a tilde as in $\widetilde{X}_t$.

One might object to this detailed level of distinction on the grounds that
different indexings mean that, in fact, we have two different
processes---processes that are coupled to one another under time reversal.
We acknowledge this point, but simplicity later on leads us to choose to refer
to \emphasis{the} process and, additionally, its forward and reverse generators.

Finally, note that our terminology---\emph{past} and \emph{future}---smacks of
privileging the process's forward generator. Indeed, the reverse generator
has its own ``past'' which corresponds to the forward generator's ``future''.
Generally though, we avoid basis-shifting discussions and continue to use the
biased terminology in prose, definitions, and figures. The result is that one
must consciously transform scanning process variables from one way to the other.
Examples will exercise this and so help clarify the issue.

\section{Generators with Observable States}
\label{sec:ObservableStates}

We review basic results about Markov processes, their reversibility, and their
models---Markov chains. In Markov chains, the states of the system are
defined to be the system observables and, so, Markov chains are models
of Markov processes whose states are observable. For a more detailed treatment
see Ref.~\cite{Levin2008}.

\subsection{Definitions}

A \vocab{finite Markov} process is a sequence of random variables
$\MeasSymbol_0 \MeasSymbol_1, \ldots$ each taking values from a finite
set $\MeasAlphabet$. However, the sequence is constrained
such that the probability of any symbol depends only on the most recently
seen symbol. Thus, for $x,y \in \MeasAlphabet$, $w \in \MeasAlphabet^{L-1}$,
and $L \in \mathbb{N}$, we have:
\begin{align*}
  \Pr( X_L=y | X_{0:L} = wx ) &= \Pr(X_L=y| X_{L-1} = x).
\end{align*}
Assuming stationarity, a finite Markov process is uniquely specified by a
right-stochastic matrix:
\begin{align*}
  T(x,y) \equiv \Pr(X_1=y|X_0=x).
\end{align*}
This matrix defines the model class of \emph{Markov chains}. As an example,
consider the \vocab{Golden Mean Process}~\cite{Crut92c}, whose
Markov chain is shown in Fig.~\ref{fig:GMP}(a). It has \vocab{state space}
$\MeasAlphabet = \{0,1\}$ and its \vocab{state transitions} are labeled by $T(x,y)$.
This Markov chain is \vocab{irreducible} since from each state one can reach any
other state by following successive transitions. We work only with
irreducible Markov chains in the following.

\begin{figure}
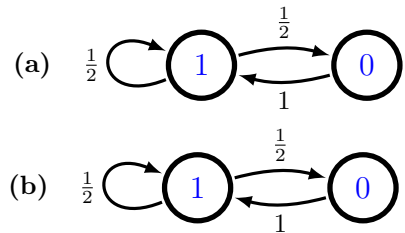

\centering

  \ifgenerate
    \ifcache
      \tikzsetnextfilename{gmp_mc}
    \fi
    \input{\tikzpathgmp_mc.tikz}
  \else
    \includegraphics{\tikzpathexternalgmp_mc}
  \fi

  \ifgenerate
    \ifcache
      \tikzsetnextfilename{gmp_mctr}
    \fi
    \input{\tikzpathgmp_mctr.tikz}
  \else
    \includegraphics{\tikzpathexternalgmp_mctr}
  \fi

\caption{(a) An irreducible Markov chain $M$ of the Golden Mean
  Process, which consists of all binary sequences with no consecutive $0$s.
  (b) Its time-reversed chain $\widetilde{M}$ which, in this case,
  is the same as the original chain.
  }
\label{fig:GMP}
\end{figure}

Every irreducible finite Markov chain has a unique
\vocab{stationary distribution} $\pi$ over $\MeasAlphabet$ obeying:
\begin{align*}
  \pi(y) = \sum_{x\in\MeasAlphabet} \pi(x) T(x,y)
  \quad \text{for all~} y \in \MeasAlphabet.
\end{align*}
In matrix notation, we simply write $\pi = \pi T$.  The Golden Mean
Markov chain has stationary distribution $\pi = (2/3, 1/3)$,
where $\Pr(X_0 = 1) = \pi(1) = 2/3$.

We calculate the probability of any word $\meassymbol_{0:L}$ by factoring
the joint probability $\Pr(\MeasSymbol_{0:L} = \meassymbol_{0:L})$ into a
product of conditional probabilities. An application of the Markov property
reduces the calculation to:
\begin{align*}
  &\hspace{-.2in}\Pr(X_0=x_0, X_1=x_1, \ldots, X_{L-1} = x_{L-1}) \\
  &= \pi(x_0) T(x_0, x_1) T(x_1, x_2)\cdots T(x_{L-2}, x_{L-1})\\
  &= \pi(x_0) \prod_{t=1}^{L-1} T(x_{t-1}, x_t).
\end{align*}

More generally, one considers \vocab{order-$R$ Markov} processes for which the
next symbol depends on the previous $R$ symbols. (See App.~\ref{app:markov}.)
Although these can be shown to be equivalent to a standard Markov chain over
a larger state space, we avoid this approach and consider the Markov order as a
property of the process. When the next symbol depends on the entire past,
though, then $R$ is infinite and the Markov chain, in effect, has an infinite
number of states. In Sec.~\ref{sec:hmm} we show how hidden Markov models can be
used to represent many such chains, while utilizing only a finite state space.

\subsection{Reversibility}

A intuitive definition of a \emph{reversible} Markov process is that it should
be indistinguishable (in probability) from the same process run backwards in
time. Thus, we define a Markov process as \vocab{reversible} if and only if for
all $w \in \MeasAlphabet^L$ and all $L \in \mathbb{N}$, we have:
\begin{align}
  \Pr(X_{0:L} = w) = \Pr(X_{0:L} = \widetilde{w}) ~,
\label{eq:revself}
\end{align}
where $w = w_0 \ldots w_{L-1}$ and $\widetilde{w} = w_{L-1} \ldots w_0$ is its
reversal.

Given a Markov process, if the transition matrix of its unique chain obeys:
\begin{align}
  \pi(x) T(x,y) = \pi(y) T(y,x),
\end{align}
for all $x,y \in \MeasAlphabet$, then we say the Markov chain is in
\vocab{detailed balance}. Note that the
uniqueness of the chain allows us to associate detailed balance with the
Markov process as well. The Markov chain representation of the Golden Mean
Process in Fig.~\ref{fig:GMP}(a) is in detailed balance.

It turns out that a stationary, finite Markov process is reversible if
and only if its Markov chain, as specified by $T$, is detailed
balance~\cite{Kelly1979a}. To see this in one direction, assume detailed
balance, then:
\begin{align*}
  \Pr(X_{0:L} = w)
  &= \pi(w_0) \prod_{t=1}^{L-1} T(w_{t-1}, w_t) \\
  &= \pi(w_{L-1}) \prod_{\mathclap{t=L-1}}^{1} T(w_t, w_{t-1})\\
  &= \Pr(X_{0:L} = \widetilde{w}) ~.
\end{align*}
Conversely, if the Markov process is reversible, then by considering only
words of length two we have $\Pr(X_0=x,X_1=y) = \Pr(X_0=y,X_1=x)$. This is
exactly the statement of detailed balance.

Given a Markov chain, we can use the condition for detailed balance to define
another chain that generates words with the same probabilities as the original
chain, but in reverse order.  If $T$ is the state transition matrix of
an irreducible Markov chain and $\pi$ is its unique stationary distribution,
then its \vocab{time-reversed} Markov chain has state transition matrix given by:
\begin{align}
  \widetilde{T}(x,y)
    &\equiv \Pr(X_0=y | X_1=x) \nonumber \\
    &= \frac{\pi(y) T(y,x)}{\pi(x)} ~.
\end{align}
It is easy to see that if $\pi$ is stationary for $T$, then it is also
stationary for $\widetilde{T}$. Figure~\ref{fig:GMP}(b) shows the time-reversed
chain for the Golden Mean Process. It is the same as the forward-time
chain and, thus, is also in detailed balance.

Considering the time-reversed Markov chain as a generator, we interpret:
\begin{align*}
  \pi(x)\widetilde{T}(x,y)\widetilde{T}(y,z)
\end{align*}
as the generator's probability of seeing $x$ followed by $y$ followed by $z$.
In its local time perspective, we can represent this as
$\widetilde{X}_{0:3} = xyz$.  By construction, our
expectation is that this probability should be equal to the probability (as
calculated by the forward generator) of seeing $x$ preceded by $y$ preceded
by $z$. That is, $X_{0:3} = zyx$.  And so, we can
justify the designation of being the \emph{time-reversed} Markov chain by
demonstrating that it does, indeed, generate words in reverse time:
\begin{align*}
  \Pr(\widetilde{X}_{0:L}=w)
    &= \pi(w_0) \prod_{t=1}^{L-1} \widetilde{T}(w_{t-1},w_t) \\
    &= \pi(w_{L-1}) \prod_{\mathclap{t=L-1}}^{1} T(w_t,w_{t-1})\\
    &= \Pr(X_{0:L} = \widetilde{w}) ~.
\end{align*}
This result provides an alternative characterization of reversibility in
Markov processes: A Markov process is reversible if and only if:
\begin{align}
\label{eq:revcomp}
  \Pr(X_{0:L} = w) = \Pr(\widetilde{X}_{0:L} = w).
\end{align}
Note that while Eq.~\eqref{eq:revself} is a self-comparison test,
Eq.~\eqref{eq:revcomp} is a comparison between two distinct Markov chains. Also,
observe that if a Markov chain is reversible, then $T = \widetilde{T}$, due to
detailed balance. Thus, a reversible Markov chain is identical to its
time-reversed Markov chain. We return to this point when we define
reversibility for hidden Markov models.

What about irreversible Markov processes? A simple example will suffice.
Consider the process that generates the periodic sequence
$\ldots ABCABCABC\ldots$. Note that the time-reversed Markov chain differs:
the forward generator will emit $AB$ but not $BA$, while the
reverse generator produces $BA$ but not $AB$.

Finally, we comment briefly on the difference between a Markov process and its
associated Markov chain. The Markov process exists in the abstract, describing a
measure over bi-infinite strings. The Markov chain is a one-sided generator
representation taking the form of a single matrix. Within this class of
representations, each stationary and finite Markov process has
exactly one finite-state Markov chain. Markov processes can
\emph{also} be represented in another model class---the hidden Markov
models---and within that model class, we will see that a given Markov process
can have multiple presentations.

\section{Generators with Unobservable States}
\label{sec:hmm}

In a similar manner, we now consider models of processes whose states are not
directly observable, also known as hidden Markov models. Though rather less
well understood than Markov processes, much progress has recently been made; for
example, see Ref.~\cite{Ephr02a}. Along the way, we highlight differences
between hidden Markov models and Markov chains---differences that force one to
consider questions of structure very carefully.

\subsection{Definitions}

We begin with a Markov chain $R_0 R_1 R_2 \ldots$ over a finite state set
$\AlternateStateSet$, the \vocab{state alphabet}. This chain is internal to the
hidden Markov model. Then, a finite-state hidden Markov model (HMM) is a
sequence of outputs $X_0 X_1 X_2 \ldots$, each taking values from a finite set
$\MeasAlphabet$ that we now call the \vocab{output alphabet}. The output
sequence is generated by the internal Markov chain through a set of
\vocab{transition-output} matrices---one matrix for each symbol $x \in
\MeasAlphabet$. Each matrix element $T_x(\alpha, \beta)$ gives the transition
from (internal) state $\alpha$ to state $\beta$ on generating output $x \in
\MeasAlphabet$. That is,
\begin{align*}
  T_x(\alpha, \beta) \equiv \Pr(X_0=x, R_1=\beta | R_0=\alpha) ~.
\end{align*}
Note that the internal Markov chain's transition matrix is the marginal
distribution over the output symbol:
\begin{align*}
  T(\alpha, \beta) & = \sum_{\mathclap{x\in\MeasAlphabet}} T_x(\alpha, \beta) \\
                   & = \Pr(R_1=\beta | R_0=\alpha) ~.
\end{align*}
If, for each $x \in \MeasAlphabet$ and $\alpha \in \AlternateStateSet$ there
exists at most one $\beta \in \AlternateStateSet$ such that $T_x(\alpha, \beta) >
0$, then we say the hidden Markov model is \vocab{unifilar}. An equivalent
statement is that the entropy of the next state, conditioned on the
current state and symbol, is zero: $H[\AlternateState_1 | \AlternateState_0,
\MeasSymbol_0] = 0$.

The \emphasis{hidden} aspect of a hidden Markov model refers to the fact that the
internal Markov chain is not directly observed---only the sequence of output
symbols $X_0 X_1 X_2 \ldots$ is seen. Note, the process associated with a hidden
Markov model refers \emph{only} to the probability distribution
$\Prob(\ldots X_0 X_1 X_2 \ldots)$ over the output symbols $\MeasSymbol_t$
and \emphasis{not} over the joint process $(R_t,X_t)$.

Non-Markov processes differ from Markov processes in that they exhibit
arbitrarily long conditional correlations. That is, the probability of the next
symbol may depend on the \emphasis{entire history} leading up to this symbol. Due
to this, non-Markov processes cannot be represented by finite-state Markov
chains. One signature of (and motivation for) hidden Markov models is that
they can represent many non-Markov processes finitely. So, whenever a process
(Markov or not) has a finite-state
hidden Markov model presentation, then we say that the process is \emph{finite}.

There are a number of hidden Markov model variants. One common variant is a
state-emitting hidden Markov model~\cite{Baum1966a}. Another variant is an
edge-emitting hidden Markov model. State-emitting hidden Markov models output
symbols during state visitations, while edge-emitting hidden Markov models
output symbols on the transitions between states. The two variants are
equivalent~\cite{Uppe97a} in that they represent the same class of processes
finitely. In the following, we always refer to the edge-emitting variant.

As before, we restrict our attention to hidden Markov models whose underlying
Markov chain is irreducible. Thus, a hidden Markov model has a unique stationary
distribution $\pi$ satisfying $\pi = \pi \sum T_x = \pi T$ and, for $\alpha \in
\AlternateStateSet$, $\pi(\alpha)$ represents the stationary probability of
being in internal state $\alpha$.

For comparison, Fig.~\ref{fig:GMPHMM}(a) displays a hidden Markov model for the
Golden Mean Process. The internal state set is $\AlternateStateSet = \{A,B\}$
and the output alphabet is $\MeasAlphabet = \{0,1\}$. The transitions between
the states sport the labels $p|\Symbol{x}$, where $p = T_x(\alpha, \beta)$.

\begin{figure}
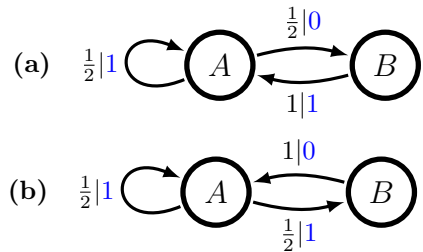

\centering

  \ifgenerate
    \ifcache
      \tikzsetnextfilename{gmp_hmm}
    \fi
    \input{\tikzpathgmp_hmm.tikz}
  \else
    \includegraphics{\tikzpathexternalgmp_hmm}
  \fi

  \ifgenerate
    \ifcache
      \tikzsetnextfilename{gmp_hmmtr}
    \fi
    \input{\tikzpathgmp_hmmtr.tikz}
  \else
    \includegraphics{\tikzpathexternalgmp_hmmtr}
  \fi

\caption{(a) The Golden Mean Process as a hidden Markov model. The internal
  state set is $\AlternateStateSet = \{A,B\}$ and the observation alphabet is
  $\MeasAlphabet = \{0,1\}$. The transitions between states specify
  $p|\Symbol{x}$ where $p = \Pr(X_0=x, R_1=\beta|R_0=\alpha)$.
  (b) Its time-reversed hidden Markov model is not the same. It
  is nonunifilar, while the forward presentation is.
  }
\label{fig:GMPHMM}
\end{figure}

The probability of any word is calculated as:
\begin{align*}
  &\hspace{-.2in}\Pr(X_0 = x_0,\ldots,X_{L-1}=x_{L-1})\\
  &=\sum_{\mathclap{\rho_0,\ldots,\rho_L}}
  \pi(\rho_0) T_{x_0}(\rho_0, \rho_1) \cdots
  T_{x_{L-1}}(\rho_{L-1}, \rho_{L}).
\end{align*}
In matrix form, with $T_w \equiv T_{w_0} \cdots T_{w_{L-1}}$, we have
\begin{align*}
  \Pr(X_{0:L}=w) = \pi T_w \one,
\end{align*}
where $\one = (1 1 \ldots 1 1)^t$.

The states $\AlternateStateSet$ and observations $\MeasAlphabet$ were synonymous
in Markov chains. The consequence of this was that every finite Markov process
was uniquely characterized by its transition matrix $T$. With hidden Markov
models, this is no longer true. A given process, even a Markov process, is
not uniquely characterized by a set of transition matrices $\{T_x\}$. To drive
this point home, Sec.~\ref{sec:structure} provides an example process that
has an uncountable number of presentations on a fixed, finite number of states.
This demonstrates the need for a canonical representation, which is also
introduced in Sec.~\ref{sec:structure}.

\subsection{Reversibility}
\label{sec:HMMReversibility}

In comparison to Markov chains, the literature on reversibility for hidden
Markov models is substantially smaller and not nearly as detailed---see,
for example, Ref.~\cite{MacDonald1997}.

Reversibility for Markov processes was defined, in Eq.~\eqref{eq:revself}, such
that the probability of every word equaled the probability of the reversed word.
We take this as a general definition, applicable even to non-Markov processes.
Thus, a process is \vocab{reversible} if and only if for all $w \in
\MeasAlphabet^L$ and all $L \in \mathbb{N}$, we have:
\begin{align}
  \Pr(X_{0:L} = w) = \Pr(X_{0:L} = \widetilde{w}),
\label{eq:NonMarkovProcessReversibility}
\end{align}
where, as before, $\widetilde{w}$ is the reversal of $w$.

Detailed balance plays a central role in Markov chains and their applications.
The analogous local-equilibrium property for hidden Markov models is more
subtle and interesting. We define \vocab{detailed balance for a hidden Markov
model} to mean that the following must hold for all $x\in\MeasAlphabet$ and
all $\alpha,\beta \in \AlternateStateSet$:
\begin{align}
  \pi(\alpha) T_x(\alpha,\beta) =
  \pi(\beta) T_x(\beta,\alpha)
  ~.
\end{align}

Trivially, if a hidden Markov model is in detailed balance, then its internal
Markov chain must also be in detailed balance. The converse, however, is not
true. Also, whenever a hidden Markov model is in detailed balance, one can show
that the process it generates is reversible. But unlike the Markov chain case,
detailed balance is \emph{not} equivalent to reversibility. And, quite generally,
the process generated by a hidden Markov model can be reversible even if the
model is not in detailed balance~\footnote{In Ref.~\cite{MacDonald1997}, it was
shown that Poisson-valued, state-emitting hidden Markov models are reversible
if their internal Markov chains are reversible. This result does not hold with
edge-emitting hidden Markov models, as demonstrated by example.}. The Golden
Mean Process of Fig.~\ref{fig:GMPHMM}(a) generates a reversible process, but
it is not in detailed balance. The contrapositive is perhaps more intriguing:
\emph{Every irreversible stationary process generated by a finite-state,
edge-emitting hidden Markov model is not in detailed balance}~\footnote{
A similar statement can be made of Markov chains since the process generated
by a Markov chain is reversible if and only the Markov chain is in
detailed balance.}.

We can use the condition of detailed balance to inspire a definition for the
time-reversed hidden Markov model. If $T_x$ are the labeled transition matrices
of a hidden Markov model and $\pi$ is its unique stationary distribution, then
its \vocab{time-reversed} hidden Markov model has labeled transition matrices
given by:
\begin{align}
  \widetilde{T}_x(\alpha,\beta)
    &\equiv \Pr(X_0=x, R_0 = \beta \,|\, R_1=\alpha) \nonumber \\
    &=\frac{\pi(\beta) T_x(\beta,\alpha)}{\pi(\alpha)} ~. \label{eq:treM}
\end{align}
The time-reversed HMM for the Golden Mean Process is given in Fig.
\ref{fig:GMPHMM}(b), which is now nonunifilar.

As before, if $\pi$ is stationary for $T = \sum T_x$, then it is also
stationary for $\widetilde{T} = \sum \widetilde{T}_x$. To justify its
designation as the \emph{time-reversed} hidden Markov model, we demonstrate that
it does indeed generate words in reverse time and, thus, generates the
time-reversed process:
\begin{align*}
  &\hspace{-.2in}\Pr(\widetilde{X}_0 = x_0,\ldots,\widetilde{X}_{L-1}=x_{L-1})\\
  &=\sum_{\mathclap{\rho_0,\ldots,\rho_L}}
  \pi(\rho_0) \widetilde{T}_{x_0}(\rho_0, \rho_1) \cdots
  \widetilde{T}_{x_{L-1}}(\rho_{L-1}, \rho_{L}) \\
  &=\sum_{\mathclap{\rho_0,\ldots,\rho_L}}
  \pi(\rho_{L}) T_{x_{L-1}}(\rho_L, \rho_{L-1}) \cdots
  T_{x_{0}}(\rho_{1}, \rho_{0}) \\
    &= \Pr(X_0 = x_{L-1},\ldots,X_{L-1}=x_{0}) ~.
\end{align*}
This result provides an alternative characterization of reversibility which
parallels that for Markov chains given in Eq.~\eqref{eq:revcomp}. That is, a
hidden Markov model is reversible if and only if for all
$w \in \MeasAlphabet^L$ and all $L \in \mathbb{N}$, we have:
\begin{align}
  \label{eq:HMMReversibilityComparison}
  \Pr(X_{0:L} = w) = \Pr(\widetilde{X}_{0:L} = w),
\end{align}
indicating that the two hidden Markov models agree on the probability of every
word; cf. Eq.~(\ref{eq:NonMarkovProcessReversibility}). Also, note that if the
hidden Markov model is in detailed balance, then it equals the time-reversed
hidden Markov model: $T_x = \widetilde{T}_x$.
We see that detailed balance is a structurally restrictive property.

For Markov chains, determining if a process is reversible amounted to
checking for detailed balance. The situation is more complicated for hidden
Markov models but, curiously enough, there exists a straightforward procedure
to check if two hidden Markov models generate the same process language. This
is known as the \vocab{identifiability} problem~\cite{Blac57a}, and its
solution~\cite{Ito92a,Bala1993,Uppe97a}, though 20 years old now, does not seem
to be as well known. A crude test is to verify that the hidden Markov model and
its time-reversed hidden Markov model agree on the probabilities of every word
of length $L$, where $L < 2|\AlternateStateSet|$ and $|\AlternateStateSet|$ is
the number of states in the model~\cite{Uppe97a}.

Another interesting question is whether or not the reversibility of the internal
Markov chain has any effect on the reversibility of the observed process. As it
turns out, the answer is no. Jumping ahead a bit, we note that the forward
\eM\ in Fig.~\ref{fig:IrreversibleExample} has a reversible internal Markov
chain, but the observed process is irreversible. Additionally, to any
irreversible Markov chain, we can simply assign the same symbol on each
outgoing edge. This creates a period-$1$ process that is definitely reversible.
So, the reversibility of the internal Markov chain can make no statement on the
reversibility of the observed process.

\section{Structure and Canonical Presentations}
\label{sec:structure}

Rarely does one work directly with a process. Needless to say,
specifying the probability of every word at every length is a cumbersome
representation. Instead, one works with generators. However, one must be
careful in choosing a representation for the latter. For example, the class of
processes representable by finite-state hidden Markov models is strictly larger
than the class of processes representable by finite-state Markov
chains~\cite{Weis73}. So, one cannot use Markov chain presentations in many
cases.

As previously noted, when the process can be represented by a finite-state
Markov chain, then that presentation is unique. If the process has no
finite-state Markov chain representation, however, then there is a challenging
multiplicity of possible hidden Markov model presentations to choose from,
many with distinct structural properties. As an example,
Fig.~\ref{fig:NUGMPHMM} gives a continuously parametrized set hidden
Markov models for the Golden Mean Process. Each value of
$z=\Pr(B,0|A) \in [\frac{1}{2},1]$ defines a unique hidden Markov model that
generates the same Golden Mean Process. That is, $\Pr(X_1=\beta|X_0=\alpha)$ is
independent of $z$ and equal to the matrix $T(\alpha,\beta)$ that defined the
Markov chain in Fig.~\ref{fig:GMP}(a). Note that this is \emphasis{only} a
two-state hidden Markov model. It is possible to construct similar families
with even more states. (The technique for constructing such
continuously parametrized presentations for a given process will
appear elsewhere.)

\begin{figure}
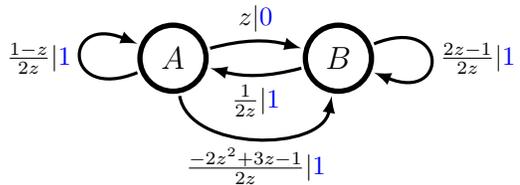

\centering

  \ifgenerate
    \ifcache
      \tikzsetnextfilename{gmp_continuous}
    \fi
    \input{\tikzpathgmp_continuous.tikz}
  \else
    \includegraphics{\tikzpathexternalgmp_continuous}
  \fi

\caption{The Golden Mean Process as a continuously parametrized hidden
  Markov model. The internal state set is $\AlternateStateSet = \{A,B\}$
  and the observation alphabet is $\MeasAlphabet = \{0,1\}$. Each value of
  $z = \Pr(B,0|A) \in [\frac{1}{2},1]$ defines a unique hidden Markov model
  that generates the same process as the models in Figs.~\ref{fig:GMP}(a)
  and \ref{fig:GMPHMM}(a).
  }
\label{fig:NUGMPHMM}
\end{figure}

This degeneracy serves to emphasize why a process's structure and that of its
presentations deserve close attention. To appreciate this concern more deeply,
we detour and examine structure explicitly. Then, we introduce \eMs\ and show
how they provide a canonical presentation that, in addition to other benefits,
resolves the degeneracy. Finally, we discuss additional notions of reversibility
that are more closely tied to and calculable from \eMs.

\subsection{Decomposing the State}
\label{sec:synccontrol}

Reference~\cite{Crut10a} presented an information-theoretic analysis of the
relationship between a hidden Markov model's states and the process it
generates. One of the main conclusions was that the internal-state uncertainty
$H[\AlternateState_0]$ can be decomposed into four independent components.
Here, we summarize the decomposition, assuming a minimal amount of information
theory. Reference~\cite{Cove06a} should be consulted for background not covered
here. Familiarity with the block entropy, entropy rate, and excess entropy as
developed in Ref.~\cite{Crut10a} is also assumed.

By splitting a process's bi-infinite sequence of random variables into a
past $X_{:0}$ and a future $X_{0:}$, we isolate the information that
passes through the \vocab{present} state $R_0$. As developed in
Refs.~\cite{Crut08a} and \cite{Crut08b}, the statistical relationships among
these three (aggregate) variables are concisely expressed using the
information diagram technique of Refs.~\cite{Yeun91a} and \cite{Yeun92a}.
Said briefly, a process's Shannon entropies and mutual informations~\cite{Shan62}
form a measure over the associated event (sequence) spaces. Given this, the
set-theoretic relationships between the measure's atoms are displayed in the
Venn-like diagram.

For a three-variable information diagram, we have three circles representing
$H[X_{:0}]$, $H[X_{0:}]$, and $H[R_{0}]$.  In total, this means that there are
$7$ atoms to consider. However, since every hidden Markov model has an internal
Markov chain that governs generation, the past and future are shielded from
each other given the current state. This is a probabilistic statement, but when
phrased in terms of conditional mutual information, we have
  $I[X_{:0} ; X_{0:} | R_0] = 0$.
A moment's reflection shows that
this is a way of saying that the hidden Markov model generates
the process. This quantity can be nonzero only if we compare a process
to the states of a hidden Markov model that generates a
\emphasis{different} process.

\begin{figure}
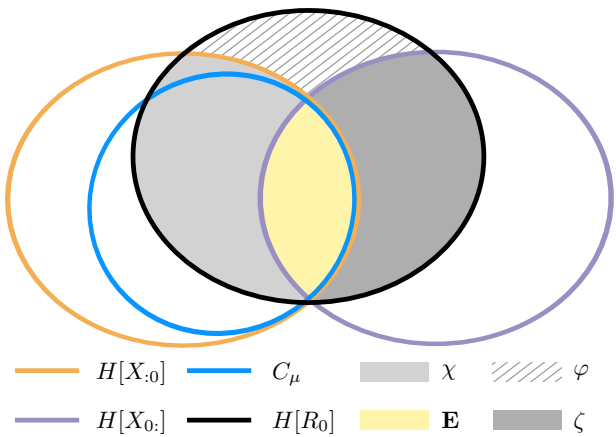

\centering
\includegraphics[scale=.9]{\baseimagedir Nunif2}
\includegraphics[scale=.9]{\baseimagedir legend}
\caption{Information diagram capturing all possible relationships between
  the past $X_{:0}$, the future $X_{0:}$, and the present---the current
  internal state $\AlternateState_0$. The statistical complexity
  $\Cmu$, excess entropy $\EE$, crypticity $\PC$, oracular
  information $\OI$, and gauge information $\GI$ appear as unions
  and intersections of the resulting atoms.
  }
\label{fig:synccontrol}
\end{figure}

The information diagram is shown in Fig.~\ref{fig:synccontrol}. There,
$H[X_{:0}]$ is represented by everything contained in the orange circle
\footnote{Keep in mind that, unless otherwise stated, these figures are not drawn to scale.
For example, the entropy of the past $H[X_{:0}]$ is infinite. Since
the drawings are not scale, we use the term \emphasis{circle} liberally.
Despite this, the important relationships of the variables are preserved.}.
The purple circle represents $H[X_{0:}]$ and the black circle, our focus,
represents \vocab{state information} $H[R_0]$. The figure contains an
additional blue circle that can be ignored until \eMs\ are introduced
in Sec.~\ref{sec:eMreview}. So, absent the blue circle, we see that the state
information decomposes into four quantities. Specifically,
\begin{equation}
H[\AlternateState_0] = \EE + \PC + \OI +\GI ~,
\end{equation}
where we have the:
\begin{enumerate}
\item \vocab{Excess entropy}:       $\EE = I[X_{:0}; X_{0:}]$,
\item \vocab{Crypticity}:           $\PC = I[X_{:0}; R_0 | X_{0:} ]$,
\item \vocab{Oracular information}: $\OI = I[R_0; X_{0:} | X_{:0} ]$, and
\item \vocab{Gauge information}:    $\GI = H[R_0 | X_{:0}, X_{0:} ]$.
\end{enumerate}

\vocab{Excess entropy} is a by-now standard measure of
complexity~\cite{Junc79,Crut83a,Gras86,Bial00a,Crut01a}
that captures the shared information between past and future observations.
\vocab{Crypticity} is a relatively new measure of structure introduced in
Refs.~\cite{Crut08a, Crut08b, Maho09a}. By comparing to the apparent
information that excess entropy measures, crypticity monitors how much of
the internal state information is hidden.
\vocab{Oracular information}, introduced in Ref.~\cite{Crut10a},
measures how much information a presentation provides that can improve
predictability, but that is not available from the past.
Finally, \vocab{gauge information}, also introduced in Ref.~\cite{Crut10a},
quantifies how much additional structural information exists in a
presentation that is not ``justified'' by the past or the future.
Taken together these quantities provide an informational basis
useful for analyzing the various kinds of structure a process or a
process's presentation contains.

To see this, we can apply these structural complexity measures to the Golden
Mean Process presentation family of Fig.~\ref{fig:NUGMPHMM}. For each value of
$z = \Pr(B,0|A) \in [\frac{1}{2}, 1]$, Fig.~\ref{fig:gmpsd} plots $\EE$, $\PC$,
$\OI$, and $\GI$ stacked in way so that their sum $H[\AlternateState_0]$ is
the top curve. One immediately sees that $\EE$ is independent of $z$. This is
as it should be since $\EE$ is a function only of the observed process and,
by construction, the parametrized presentation always generates the Golden
Mean Process. All of the other measures change as the presentation changes,
however. Let's explore what they tell us.

\begin{figure}
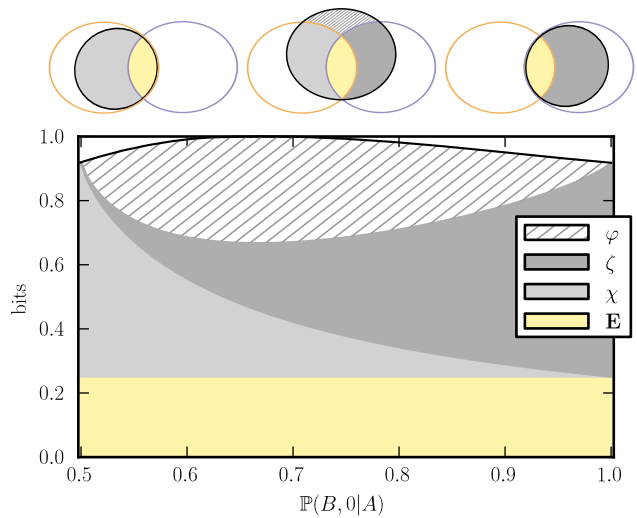

\centering
\hspace{.2in}\includegraphics[scale=.28]{\baseimagedir SyncControl_start}
\includegraphics[scale=.28]{\baseimagedir SyncControl_mid}
\includegraphics[scale=.28]{\baseimagedir SyncControl_end}
\includegraphics[width=\columnwidth]{\baseimagedir gmpsd_edited}
\caption{Decomposition of the state information $H[\AlternateState]$ contained
  in the parametrized Golden Mean Process presentation family of
  Fig.~\ref{fig:NUGMPHMM}. As a function of $z = \Pr(B,0|A) \in [\frac{1}{2},
  1]$, the excess entropy $\EE$, crypticity $\PC$, oracular information $\OI$,
  and gauge information $\GI$ are stacked such that the top of the curve is
  their sum $H[\AlternateState]$, the state entropy of the presentation for the
  given value of $z$. The miniaturized information diagrams are special cases of
  Fig.~\ref{fig:synccontrol} tailored to $z$-values. From left-to-right, we have
  $z=\frac{1}{2}$, $z = \frac{3}{4}$, and $z=1$. }
\label{fig:gmpsd}
\end{figure}

Beginning with $z=1/2$, we recover the Markov chain presentation of
Fig.~\ref{fig:GMPHMM}(a). In this presentation, all of the state information
$H[\AlternateState_0]$ is contained within $H[X_{:0}]$. This is represented by
the leftmost information diagram at the top of Fig.~\ref{fig:gmpsd}. Loosely,
we say that the state
information contains only information from the past. However, one must keep in
mind that the presentation still captures $\EE$ bits of information, and this
information is shared with the future. The gauge and oracular informations
vanish. It turns out that the $z=1/2$ presentation is the process's forward
\eM, but more on this later.

As $z$ increases, so do the gauge and oracular informations. With
this change, the information diagram circle for $H[\AlternateState_0]$ straddles
$H[X_{:0}]$ and $H[X_{0:}]$, as shown in the central information diagram
atop Fig.~\ref{fig:gmpsd}. This indicates that the state information now
consists of historical information, oracular information, and also gauge
information. For all values of $z$, the overlap that $H[\AlternateState_0]$ has
with the intersection of the past and future is constant. This is because each
presentation generates the process and so each must capture $\EE$ bits of
shared information.

Finally when $z = 1$, the circle for $H[\AlternateState_0]$ is now completely
contained inside the future $H[X_{0:}]$. Now, the information diagram
resembles the right-most one atop Fig.~\ref{fig:gmpsd}. There is no crypticity,
no gauge information, but there is oracular information. The interpretation is
that the state information, apart from $\EE$, consists only of information from
the future. As we will see, the $z=1$ presentation corresponds to the
time-reversed presentation of the reverse \eM. And, since the Golden Mean
Process is a reversible Markov chain, the $z=1$ information diagram mirrors
the diagram for $z=\frac{1}{2}$.

\subsection{\texorpdfstring{\EMs}{eMachines}}
\label{sec:eMreview}

We discussed processes in the context of generators, as represented by Markov
chains and hidden Markov models, but another important aspect concerns prediction.
As we will show, \eMs\ are a natural consequence of this perspective, and
they provide a much richer analysis of irreversibility.  Additionally, their
uniqueness provides a solution to the multiplicity of HMM presentations.

Consider again a process's output sequence and, now, interpret time as
increasing with the index. The result is a time-series
$\ldots \MeasSymbol_{t-1} \MeasSymbol_t \MeasSymbol_{t+1} \ldots$. Our goal
is to construct a model that predicts future observations. Specifically,
we want to find sufficient statistics that preserve our ability to predict.
Translating this into a concrete procedure, we first remove redundancies in
the time-series, by grouping histories that lead to the same distribution
over futures:
\begin{align*}
    \meassymbol_{:0} \sim \meassymbol_{:0}^\prime
    \iff
    \Pr(X_{0:} | X_{:0} = \meassymbol_{:0})
    = \Pr(X_{0:} | X_{:0} = \meassymbol_{:0}^\prime) ~.
\end{align*}
The grouping defines an equivalence relation over histories and, thus,
partitions the space of histories. This partition is the coarsest one that
provides optimal prediction. It is called the process's
\vocab{causal state partition}. Each equivalence class is known as a
\vocab{causal state} and, thus, to each causal state, there is a unique
distribution over futures \cite{Crut88a,Crut92c,Shal98a}.
The set of causal states is denoted $\CausalStateSet$.

Now, consider a semi-infinite history $X_{:0} = x_{:0}$ which,
by the causal state equivalence relation, induces causal state
$\CausalState_0=\sigma_0$.  If we append a new observation, we get
$X_{:1} = x_{:0} x_0$ which, in turn, induces $\CausalState_1=\sigma_1$.
In this sense, there is a natural dynamic over the causal states that is
induced by the dynamic over the observed sequences. This dynamic is represented
in Fig.~\ref{fig:eMdynamic}. The pair of causal states and transition dynamic
is called a process's \vocab{\eM}.

\begin{figure}
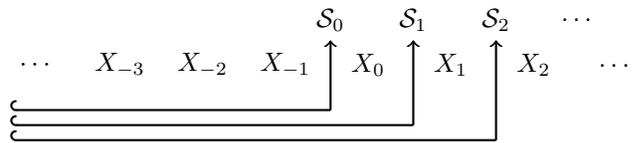

\centering

  \ifgenerate
    \ifcache
      \tikzsetnextfilename{eMdynamic}
    \fi
    \input{\tikzpatheMdynamic.tikz}
  \else
    \includegraphics{\tikzpathexternaleMdynamic}
  \fi

\caption{The dynamic over the causal states is induced by the dynamic
  over the (semi-infinite) histories.  For example, a history
  $\MeasSymbol_{-1:}$ ending at $t=-1$, maps to causal state
  $\CausalState_0$.  When a new symbol $X_0$ is appended to the old history,
  we induce a new causal state $\CausalState_1$.
  }
\label{fig:eMdynamic}
\end{figure}

Generally, the set of causal states can be uncountable, countable, or finite;
see, for example, Fig. 17 in Ref. \cite{Crut92c}.
Even when the set is not finite, the set that is visited infinitely often
may be finite. The infinitely visited subset defines the \vocab{recurrent
causal states}. All other states are \vocab{transient causal states} and
not the subject of our discussion here.  Now, when the set of recurrent causal
states is finite, then the \eM---obtained by partitioning histories for the
purposes of prediction---is representable as a finite-state unifilar
hidden Markov model.  We denote the transition matrices in the same way,
except that we use $\CausalState$ as the state random variables, which take on values
from $\CausalStateSet$:
\begin{align*}
  T_x(\alpha,\beta) =
  \Pr(\MeasSymbol_0=x, \CausalState_1=\beta | \CausalState_0=\alpha)~.
\end{align*}
\EMs\ with a finite number of recurrent states generate a subset of the
\vocab{finitary} processes---processes with finite excess entropy.
This subset represents a strictly larger set of processes than
finite-state Markov chains since it includes processes with measures over
strictly sofic~\cite{Cove75} shifts.

The \eM\ is the unique presentation in the class of unifilar hidden Markov
models~\cite{Crut88a,Shal98a} and, thus, it defines a canonical presentation
for a given process. There are other benefits. For one, \eM\
unifilarity allows one to directly calculate the process entropy rate. Early
on, Shannon pointed out that this is always possible to do with Markov
chains. It was soon discovered that it is not possible using nonunifilar
hidden Markov models \cite{Blac57a}. Nonunifilarity makes each presentation
state appear more random than it actually is. For a more detailed treatment of
\eMs, see Ref.~\cite{Crut88a}.

We pause briefly to point out that the unifilarity property of the \eM\ is a
consequence of the equivalence relation. It has been known for some
time~\cite{Crut92c,Crut08b,Lohr2009a} that there are nonunifilar hidden
Markov models of processes that can be smaller, sometimes substantially
smaller, than the process's \eM.  However, finding a canonical presentation
within the class of nonunifilar hidden Markov models is a task that has evaded
solution.  One obvious choice is to focus on the hidden Markov model that
minimizes the state entropy; see Ref.~\cite{Lohr2009a} for further discussion.
Since our goal is to analyze the role that structure plays in irreversibility,
having a canonical representation is essential. So, our focus on \eMs\ is based,
in part, on practicality since one can \emph{calculate} the \eM\
from any alternative presentation. It is also theoretically useful since
many quantities---such as the process entropy rate and excess entropy---are
not exactly calculable from nonunifilar presentations.  Additionally, the
states of nonunifilar presentations are not sufficient statistics for the
histories. The consequence is that one cannot forget the past and work with
an individual state in a general hidden Markov model---instead, one must work
with a distribution over the states~\footnote{The entropy of the distribution of
distributions over states is precisely the \protect\eM's statistical
complexity $C_\mu = H[\CausalState]$.}.

Our discussion of processes began by pointing out that \emph{time} is merely
an interpretation of the indices on a set of random variables. Thus far, we
described \eMs\ from the forward perspective---yielding the forward \eM,
denoted $\FutureEM$. Similarly, following Refs. \cite{Crut08a,Crut08b} one can
partition futures for the purposes of retrodiction, and this partitioning
induces a dynamic over the reverse causal states. The resulting unifilar hidden
Markov model is known as the reverse \eM, denoted $\ReverseEM$. To
differentiate the states in each hidden Markov model, we let
$\ForwardCausalState_t$ represent the random variables for the
\vocab{forward causal states} and use $\ReverseCausalState_t$ for the
\vocab{reverse causal states}.  The equivalence relations used during
partitioning, $\sim^+$ and $\sim^-$, are generally distinct. We use
$\epsilon^+ : \past \to \FutureCausalState$ to denote
the mapping that takes a history and returns the forward causal state into
which the history was partitioned. Similarly, we use $\epsilon^- :
\future \to \PastCausalState$ to denote the
mapping from futures to reverse causal states.

\begin{figure}
\centering

  \ifgenerate
    \ifcache
      \tikzsetnextfilename{processlattice}
    \fi
    \input{\tikzpathprocesslattice.tikz}
  \else
    \includegraphics{\tikzpathexternalprocesslattice}
  \fi

\caption{Hidden Process Lattice: The $\MeasSymbol$ variables denote the
  observed process; the $\CausalState$ variables, the hidden causal states.
  If one scans the observed variables in the positive direction---seeing
  $\MeasSymbol_{-3}$, $\MeasSymbol_{-2}$, and $\MeasSymbol_{-1}$---then that
  history takes one to causal state $\ForwardCausalState_0$. Analogously, if
  one scans in the reverse direction, then the succession of variables
  $\MeasSymbol_{2}$, $\MeasSymbol_{1}$, and $\MeasSymbol_{0}$ leads to
  $\ReverseCausalState_0$. The colors indicate which variables participate
  in the information measures of Fig.~\ref{fig:eMidiagram}.
  }
\label{fig:ProcessLattice}
\end{figure}

To orient ourselves, Fig.~\ref{fig:ProcessLattice} places the relevant random
variables on a lattice. The $\MeasSymbol$ variables denote the observed process
of the hidden Markov model, which is broken up into the past (orange) and
future (purple) observation sequences.  The hidden causal states are
represented by the $\CausalState$ variables.  In the \emph{present}, we
have $\ForwardCausalState_0$ and $\PastCausalState_0$ straddling the
\emph{past} and \emph{future}. If one scans the observed variables in the
positive direction---seeing $\MeasSymbol_{-3}$, $\MeasSymbol_{-2}$,
and $\MeasSymbol_{-1}$---then that history takes one to causal state
$\ForwardCausalState_0$. Analogously, if one scans in the reverse direction,
then the succession of variables $\MeasSymbol_{2}$, $\MeasSymbol_{1}$, and
$\MeasSymbol_{0}$ leads to $\ReverseCausalState_0$.

Summarizing, we represent each \eM\ as a commuting diagram that operates
on the hidden process lattice, using $x$ and $\sigma$ to represent symbol and
causal state realizations, respectively:
\begin{center}

  \ifgenerate
    \ifcache
      \tikzsetnextfilename{commutingdiagram_feM}
    \fi
    \input{\tikzpathcommutingdiagram_feM.tikz}
  \else
    \includegraphics{\tikzpathexternalcommutingdiagram_feM}
  \fi

\qquad

  \ifgenerate
    \ifcache
      \tikzsetnextfilename{commutingdiagram_reM}
    \fi
    \input{\tikzpathcommutingdiagram_reM.tikz}
  \else
    \includegraphics{\tikzpathexternalcommutingdiagram_reM}
  \fi

\end{center}
For the forward \eM\ $M^+$, every past $\meassymbol_{:0}$ maps to a unique next
past $\meassymbol_{:1}$ on symbol $\meassymbol_0$.  By the forward-looking
map $\epsilon^+$, each past $\meassymbol_{:0}$ corresponds to
unique causal state $\forwardcausalstate_0$. This many-to-one correspondence
induces a dynamic on the causal states such that $\futurecausalstate_0$
transitions to $\forwardcausalstate_1$ on symbol $\meassymbol_0$.  Similarly,
for the reverse \eM\ $\ReverseEM$, every future $\meassymbol_{1:}$ maps to a
unique next future $\meassymbol_{0:}$ on symbol $\meassymbol_0$.  The
reverse-looking map $\epsilon^-$ associates $\meassymbol_{1:}$ with
$\causalstate^-_1$. The many-to-one correspondence induces a dynamic on the
reverse causal states such that $\reversecausalstate_1$ transitions to
$\reversecausalstate_0$ on symbol $\meassymbol_0$.

\begin{figure}
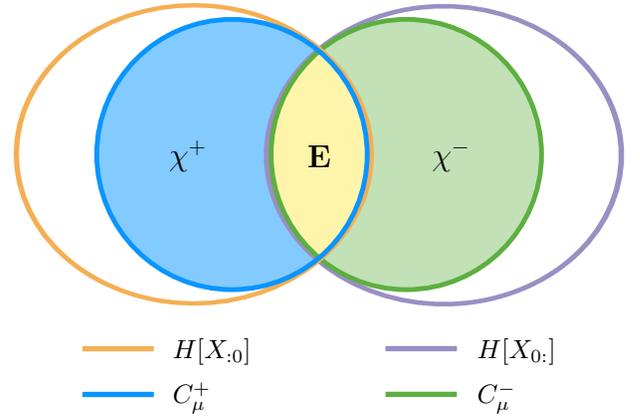

\includegraphics[scale=.7]{\baseimagedir IDiagramEM}\\
\vspace{.1in}
\includegraphics[scale=1]{\baseimagedir legendeM}
\caption{Information diagram for the forward and reverse \eMs.}
\label{fig:eMidiagram}
\end{figure}

Finally, we gather the forward and reverse \eMs\ in Fig.~\ref{fig:eMidiagram}.
Together, they provide complementary views of the process. For example, the
minimal amount of information one must store in order to generate the process
in the forward direction defines the \vocab{forward statistical complexity}
$\FutureCmu \equiv H[\ForwardCausalState]$. This information, in general,
is not equal to the minimal amount of information one requires for retrodiction
$\PastCmu \equiv
H[\PastCausalState]$~\cite{Crut08a,Crut08b,Maho09a}. Notably,
the \eM\ has no gauge information since it is minimal and, also, no oracular
information since it is unifilar. Referring briefly back to
Figs.~\ref{fig:synccontrol} and \ref{fig:gmpsd}, when $z=\frac{1}{2}$, we have
the forward \eM.  When $z=1$, we have the time-reversed presentation of the
reverse \eM\ $\widetilde{M}^-$. The interpretation is direct: The crypticity
$\PC^-$ of the reverse \eM\ becomes oracular information $\OI$ in the
time-reversed presentation.

Our preference, from now on, is to use the forward and reverse \eMs.
Given the forward \eM\ $M^+$, we can construct, via Eq.~\eqref{eq:treM}, a
reverse generator of the process $\widetilde{M}^+$. However, this is just one
presentation among many possible reverse generators of the process.  So, we
operate on that reverse generator, using techniques from Ref.~\cite{Crut08b},
and obtain the reverse \eM\ $M^-$.  Together, the forward and reverse \eMs\
serve as the basis for understanding processes through the use of generators.
In Sec.~\ref{sec:bieM}, we unify the two \eMs\ into a single machine and
discuss its meaning in the context of the decomposition of state information.

\subsection{Finite State Automata}

One interesting property of \eMs, and hidden Markov models in general, is that
they are intimately related to automata in formal language theory \cite{Hopc79}.
Here, we briefly review their relationship.

Given a process $\Process$, we can examine the set of all words that occur
with positive probability. This set is known as the \vocab{support} of the
process's stochastic language.  Stripping away the transition probabilities of
any finite-state hidden Markov model leaves
a finite-state automaton that generates the support of the process language.
So, we see that the support of a process generated by finite-state HMM always
corresponds to a \emph{regular} language.  If the hidden Markov model was
unifilar, then the resulting structure, without probabilities, is equivalent
to a deterministic finite automata (DFA). Similarly, nonunifilar hidden
Markov models map to nondeterministic finite automata (NFA).

However, it is necessary to point out that there are quite drastic differences
between formal and process languages. While DFAs and NFAs are equivalent in
the set of formal languages that each can represent using a finite number of
states, the same is not true of hidden Markov models. In fact, there are
finite-state nonunifilar HMMs that have no corresponding finite-state unifilar
counterpart.  One well known example is the
\vocab{Simple Nondeterministic Source} of Ref.~\cite{Crut92c}. It can be
represented as a two-state nonunifilar HMM, but its \eM---the smallest unifilar
HMM generating the same process---requires a countably infinite number of states.

Since it will be useful to compare topological properties to statistical
properties, we define $M_\emptyset^+$ and $M_\emptyset^-$ as the deterministic
finite-state automata corresponding to the forward and reverse \eMs\ with
all probabilities removed. Note that these DFAs need not be the minimal
deterministic finite-state automata~\cite{Hopc79} generating the support, and
this fact highlights the difference between the causal-state equivalence
relation and the Nerode state-equivalence relation in formal language theory.
If we
subsequently minimize $M_\emptyset^+$ and $M_\emptyset^-$, we are left with the
minimal and unique DFAs that generate the support, respectively denoted
$\ForwardDFA$ and $\ReverseDFA$.

Also, we mention that there is a large body of literature in formal language
theory concerning $k$-reversible
languages~\cite{Angluin1982,Pin1992,Lombardy2002,Sempere2006a,Garcia2009}.
This topic does not relate directly to our notion of reversibility and is
rather closer to addressing a process's Markov order; cf. Ref. \cite{Jame10a}.

One can view \eMs\ as probabilistic counterparts to DFAs.
In fact, the relation between formal language theory and stochastic languages
can be extended. Just as there is a hierarchy of models in formal
language theory, one can consider a hierarchy of stochastic models as well.
See, for example, the process hierarchy proposed in Ref.~\cite{Crut92c}.

\subsection{Reversibility Revisited}

By focusing on \eMs, we side-step the representational degeneracy of hidden
Markov models. Recalling the uniqueness of the forward and reverse \eMs,
we note that properties of the \eM\ can also be interpreted as properties
of the process. This also allows us to consider additional measures of
reversibility that are based on structural properties of the \eM.
So, while each of the forthcoming definitions can be stated
strictly in terms of the process's probability distribution, we prefer to use
equivalent definitions in terms of the forward and reverse \eMs.  This
is akin to studying formal languages through the use of the minimal DFAs.

As Ref.~\cite{Uppe97a} demonstrated, there is a finite procedure for determining
whether two finite-state hidden Markov models generate the same process language.
By Eq.~\eqref{eq:HMMReversibilityComparison}, this technique also provides a
method for determining whether a process is reversible or not. An alternate
technique involves the forward and reverse \eMs. With them, one simply asks
if the two machines are identical to each other. If so, then the process is
reversible. In Ref.~\cite{Crut08b}, this property was termed
\emph{microscopic reversibility} and we write: $M^+ = M^-$.

We can also consider several weaker forms of reversibility. For example, as we noted,
the process that repeats $ABC\ldots$ indefinitely is not reversible, but the
\eMs\ are essentially the same in that the amount of information one requires
for prediction equals the amount required for retrodiction. Following
Ref.~\cite{Crut08b}, a process is \emph{causally reversible} if and only if
$\ForwardCmu = \ReverseCmu$.

In terms of topology, we say that a process is \emph{support reversible}
if and only if $\ForwardDFA = \ReverseDFA$, where equality means that DFAs
must be identical under an isomorphism over the states. Finally, we also
consider symbol isomorphisms.  If there exists an isomorphism
from the output alphabet of $\ForwardEM$ to the output alphabet of $\ReverseEM$
that renders the two machines equal, then we say that the process is
\emph{reversible under symbol isomorphism}, denoted
$\ForwardEM \cong \ReverseEM$. Similarly, the process is
\emph{support reversible under symbol isomorphism} if and only if
$\ForwardDFA \cong \ReverseDFA$.

\section{Examples}

This section exercises the preceding theory, giving a number of additional
results and illustrating them through example processes and presentations. We
start with an exploration of which kinds of reversibility there can be. Then we
analyze in detail two example irreversible processes, one with a rather
counterintuitive property. The analyses give a concrete understanding of how
irreversibility arises and what its structural consequences are for a process.
The section closes with a survey that demonstrates the dominance of
irreversibility among processes.

\subsection{Causal Reversibility Roadmap}
\label{sec:sec:diversitymatrix}

Given these various notions of reversibility, a natural question comes to mind:
What combinations are possible? To this end, we state a number of
straightforward relationships:
\begin{align}
    \ForwardEM = \ReverseEM
      \quad &\Rightarrow \quad
    \FutureCmu = \PastCmu
	\label{eq:EMEquivCausalRev} \\
    \ForwardEM = \ReverseEM
      \quad &\Rightarrow \quad
    \ForwardDFA = \ReverseDFA\\
    \ForwardEM = \ReverseEM
      \quad &\Rightarrow \quad
    \ForwardEM \cong \ReverseEM\\
    \ForwardEM \cong \ReverseEM
      \quad &\Rightarrow \quad
    \ForwardDFA \cong \ReverseDFA\\
    \ForwardDFA = \ReverseDFA
      \quad &\Rightarrow \quad
    \ForwardDFA \cong \ReverseDFA
	\label{eq:DFAEquivDFAIso}
\end{align}

Now, let us restrict attention to just the causally reversible processes
($\FutureCmu = \PastCmu$) and examine microscopic and support reversibility,
with and without symbol isomorphisms. That is, we consider the combinations of
the four properties (i) $\ForwardEM = \ReverseEM$, (ii)
$\ForwardEM \cong \ReverseEM$, (iii)
$\ForwardDFA = \ReverseDFA$, and (iv) $\ForwardDFA \cong \ReverseDFA$.
Of the $16$ possible Boolean-vector combinations, only $6$ are possible due
Eqs. (\ref{eq:EMEquivCausalRev}) - (\ref{eq:DFAEquivDFAIso}).


\newcolumntype{M}{ >{\centering\arraybackslash} m{4cm} }
\newcolumntype{N}{ >{\centering\arraybackslash} m{2cm} }

\begin{table*}
\centering
\begin{tabular}{|M|M|N|N|N|N|}
  \hline
  $M^+ \vphantom{M^{M^M}_{M_M}}$ & 
  $M^- \vphantom{M^{M^M}_{M_M}}$ &
  $\ForwardEM = \ReverseEM \vphantom{M^{M^M}_{M_M}}$ &
  $\ForwardEM \cong \ReverseEM \vphantom{M^{M^M}_{M_M}}$ &
  $\ForwardDFA = \ReverseDFA \vphantom{M^{M^M}_{M_M}}$ &
  $\ForwardDFA \cong \ReverseDFA \vphantom{M^{M^M}_{M_M}}$ \\
  \hline\hline
%
  
  \ifgenerate
    \ifcache
      \tikzsetnextfilename{diversity_FFFF_feM}
    \fi
    \input{\tikzpathdiversity_FFFF_feM.tikz}
  \else
    \includegraphics{\tikzpathexternaldiversity_FFFF_feM}
  \fi

  &
  
  \ifgenerate
    \ifcache
      \tikzsetnextfilename{diversity_FFFF_reM}
    \fi
    \input{\tikzpathdiversity_FFFF_reM.tikz}
  \else
    \includegraphics{\tikzpathexternaldiversity_FFFF_reM}
  \fi

  &
  F & F & F & F\\\hline 
%
  
  \ifgenerate
    \ifcache
      \tikzsetnextfilename{diversity_FFFT_feM}
    \fi
    \input{\tikzpathdiversity_FFFT_feM.tikz}
  \else
    \includegraphics{\tikzpathexternaldiversity_FFFT_feM}
  \fi

  &
  
  \ifgenerate
    \ifcache
      \tikzsetnextfilename{diversity_FFFT_reM}
    \fi
    \input{\tikzpathdiversity_FFFT_reM.tikz}
  \else
    \includegraphics{\tikzpathexternaldiversity_FFFT_reM}
  \fi

  &
  F & F & F & T\\\hline
%
%
  
  \ifgenerate
    \ifcache
      \tikzsetnextfilename{diversity_FFTT_feM}
    \fi
    \input{\tikzpathdiversity_FFTT_feM.tikz}
  \else
    \includegraphics{\tikzpathexternaldiversity_FFTT_feM}
  \fi

  &
  
  \ifgenerate
    \ifcache
      \tikzsetnextfilename{diversity_FFTT_reM}
    \fi
    \input{\tikzpathdiversity_FFTT_reM.tikz}
  \else
    \includegraphics{\tikzpathexternaldiversity_FFTT_reM}
  \fi

  &
  F & F & T & T\\\hline
%
%
  
  \ifgenerate
    \ifcache
      \tikzsetnextfilename{diversity_FTFT_feM}
    \fi
    \input{\tikzpathdiversity_FTFT_feM.tikz}
  \else
    \includegraphics{\tikzpathexternaldiversity_FTFT_feM}
  \fi

  &
  
  \ifgenerate
    \ifcache
      \tikzsetnextfilename{diversity_FTFT_reM}
    \fi
    \input{\tikzpathdiversity_FTFT_reM.tikz}
  \else
    \includegraphics{\tikzpathexternaldiversity_FTFT_reM}
  \fi

  &
  F & T & F & T\\\hline
%
%
  
  \ifgenerate
    \ifcache
      \tikzsetnextfilename{diversity_FTTT_feM}
    \fi
    \input{\tikzpathdiversity_FTTT_feM.tikz}
  \else
    \includegraphics{\tikzpathexternaldiversity_FTTT_feM}
  \fi

  &
  
  \ifgenerate
    \ifcache
      \tikzsetnextfilename{diversity_FTTT_reM}
    \fi
    \input{\tikzpathdiversity_FTTT_reM.tikz}
  \else
    \includegraphics{\tikzpathexternaldiversity_FTTT_reM}
  \fi

  &
  F & T & T & T\\\hline
%
  
  \ifgenerate
    \ifcache
      \tikzsetnextfilename{diversity_TTTT_feM}
    \fi
    \input{\tikzpathdiversity_TTTT_feM.tikz}
  \else
    \includegraphics{\tikzpathexternaldiversity_TTTT_feM}
  \fi

  &
  
  \ifgenerate
    \ifcache
      \tikzsetnextfilename{diversity_TTTT_reM}
    \fi
    \input{\tikzpathdiversity_TTTT_reM.tikz}
  \else
    \includegraphics{\tikzpathexternaldiversity_TTTT_reM}
  \fi

  &
  T & T & T & T\\
  \hline
\end{tabular}
\caption{Diversity of causally reversible processes ($\Cmu^+ = \Cmu^-$):
  Example presentations for forward and reverse \eM\ pairs, with the same
  number of states, for the $6$ possible combinations; all other
  combinations are impossible.
  }
\label{table:diversitymatrix}
\end{table*}

Table~\ref{table:diversitymatrix} gives example forward and reverse \eM\ pairs
for each of the $6$ possibilities.  What we learn from these examples is that
causal reversibility indeed captures a larger class of processes than
microscopic reversibility. However, it also captures a bit more, including
processes that are not isomorphic to one another under a symbol isomorphism.
The table also demonstrates that irreversibility is not \emph{only}
a topological concern---the forward and reverse DFAs can be identical while
the generated process languages are not.

\subsection{Causal Irreversibility}

Irreversible processes are ubiquitous, even among those represented by
finite-state \eMs. In our first example, we ground intuitions with a
process whose irreversibility is driven topologically. The example
is particularly illustrative since its \eMs\ have a finite number of
causal states. In the second example, we examine an irreversible process
whose forward and reverse DFAs are identical; this demonstrates that
irreversibility can arise purely probabilistically. Then, in the third example,
we see the extent to which probability aggravates irreversibility when it
causes a finite-state forward \eM\ to become an infinite-state reverse \eM.

\subsubsection{Support-Driven Irreversibility}

The first example we consider shows that a process can have different, but
finite, numbers of forward and reverse causal states.
Formally, Ref.~\cite{Crut08b} provides the technique for calculating the
reverse \eM\ via operations on the graph structure of the forward
\eM\ but, for pedagogical reasons, both the forward and reverse causal states
are constructed in terms of $X_t$ only~\footnote{In this work and also in
Ref.~\cite{Crut08b}, two equivalence relations were defined. The forward
equivalence relation $\sim^+$ partitioned $X_{:0}$, while the reverse
equivalence relation $\sim^-$ partitioned $X_{0:}$.  However, these
relations are formally the same in that they both partition a generator's
\emph{local time} histories. To see this, recall that $X_{0:} / \sim^-$ is
isomorphic to $\protect\widetilde{X}_{:0} / \sim^+$.}.

\begin{figure*}
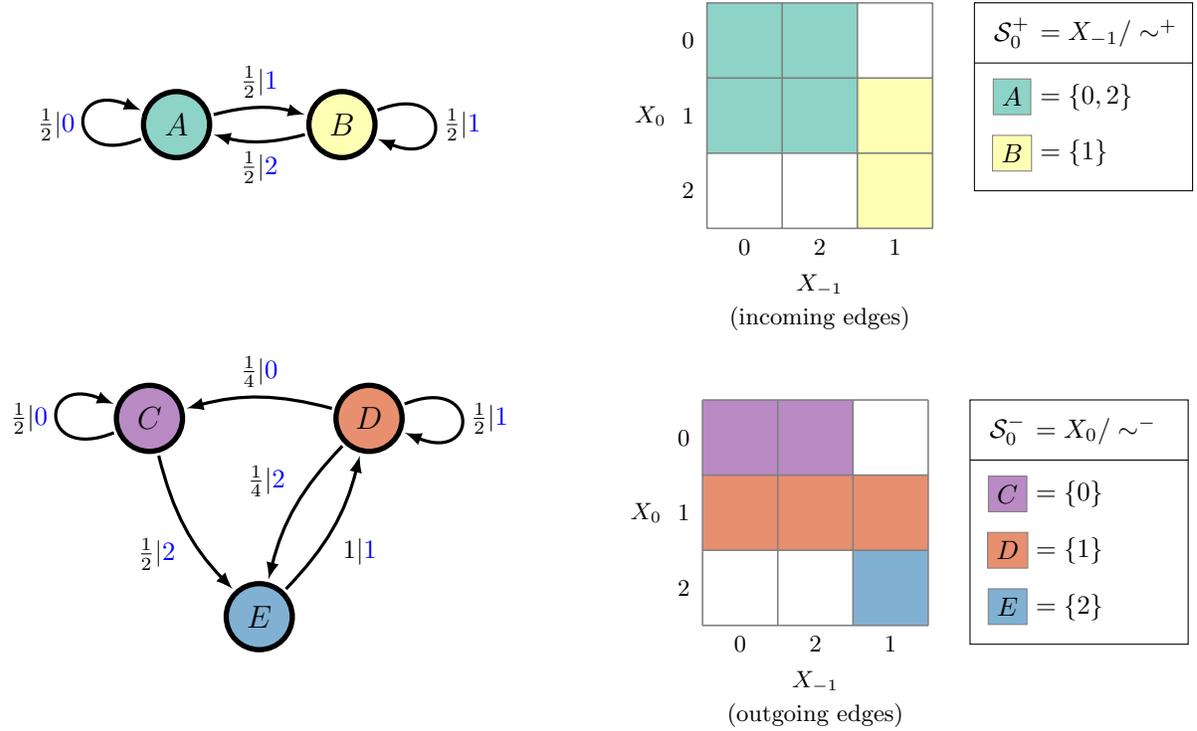

\noindent%
\begin{minipage}{\columnwidth}
  \centering
  \vspace{-.15in}
  
  \ifgenerate
    \ifcache
      \tikzsetnextfilename{irrev2to3_feM}
    \fi
    \input{\tikzpathirrev2to3_feM.tikz}
  \else
    \includegraphics{\tikzpathexternalirrev2to3_feM}
  \fi

  \\\vspace{.8in}
  \noindent
  
  \ifgenerate
    \ifcache
      \tikzsetnextfilename{irrev2to3_reM}
    \fi
    \input{\tikzpathirrev2to3_reM.tikz}
  \else
    \includegraphics{\tikzpathexternalirrev2to3_reM}
  \fi

\end{minipage}%
\begin{minipage}{\columnwidth}
  \centering
  
  \ifgenerate
    \ifcache
      \tikzsetnextfilename{irrev2to3_feMpartition}
    \fi
    \input{\tikzpathirrev2to3_feMpartition.tikz}
  \else
    \includegraphics{\tikzpathexternalirrev2to3_feMpartition}
  \fi

  \\\vspace{.3in}
  
  \ifgenerate
    \ifcache
      \tikzsetnextfilename{irrev2to3_reMpartition}
    \fi
    \input{\tikzpathirrev2to3_reMpartition.tikz}
  \else
    \includegraphics{\tikzpathexternalirrev2to3_reMpartition}
  \fi

\end{minipage}
\caption{The forward (top left) $M^+$ and reverse (bottom left) $M^-$ \eMs\ for
  a causally irreversible process. Note that $D^+ \neq D^-$ and, thus,
  $M^+ \neq M^-$.  The forward causal states $\ForwardCausalState$
  (top right) partition all allowable histories $\ldots X_{-1}$. In this
  example, the states are uniquely characterized
  by specifying the most recent symbol $X_{-1}$.  For example, any valid
  history ending with a $\Symbol{0}$ or $\Symbol{2}$ maps to state $A$, and
  the possible futures can begin with a $\Symbol{0}$ or $\Symbol{1}$. The
  incoming edges of $M^+$ correspond to histories ($X_{-1}$), while its outgoing
  edges correspond to futures ($X_{0}$).
  The reverse causal states $\ReverseCausalState$ (bottom right) partition
  all allowable futures $X_0...$. In this example, the states are uniquely
  characterized by specifying the earliest symbol of each future.  For example,
  any valid future beginning with a $\Symbol{0}$ maps to state $C$ and the
  associated histories must end with a $\Symbol{0}$ or $\Symbol{2}$. The incoming
  edges of $M^-$ correspond to futures ($X_0$), while its outgoing edges
  correspond to histories ($X_{-1})$.}
  \label{fig:IrreversibleExample}
\end{figure*}

Consider the time series over the alphabet $\{0,1,2\}$ whose forward ($M^+$)
and reverse ($M^-$) \eMs\ are shown in Fig.~\ref{fig:IrreversibleExample}.
As we will show, the process language generated by $M^+$ is irreversible and,
additionally, this irreversibility is due to an underlying topological
irreversibility. That is,
$\ForwardDFA \neq \ReverseDFA$ implies that $\ForwardEM \neq \ReverseEM$.

To see the topological irreversibility note that in $\FutureEM$ $w=01$ is a
valid word: Start in $A$, see $0$ and stay in $A$, then see $1$ and go to $B$.
However, $\widetilde{w} = 10$ is not a valid word. We can also see this in a
slightly different light by noting that $w$ is valid in $\FutureEM$, but not
valid in $\ReverseEM$.

To understand the forward causal states, consider the distribution
of $X_0=(0,1,2)$ conditioned on length-$1$ history suffixes:
\begin{align*}
  \Pr(X_0|X_{-1}=0) &= (1/2,1/2,0) ,\\
  \Pr(X_0|X_{-1}=1) &= (0, 1/2,1/2) , ~\text{and}\\
  \Pr(X_0|X_{-1}=2) &= (1/2,1/2,0) .
\end{align*}
We see that the time series generated by this machine has the following
characteristics: Every history that ends on symbol $0$ or $2$, is followed by
either $0$ or $1$, with probability $1/2$, but never by symbol $2$. Hence, with
regard to the distribution of a one-step future, all histories ending on $0$ or
$2$ are equivalent and we denote this class of equivalent
histories as causal state
$A$. The distribution of symbols following words ending on symbol $1$ is
different. They are followed by either symbols $1$ or $2$ with probability
$1/2$, but never by symbol $0$. All histories ending in $1$ are hence
equivalent with respect to the distribution of a one-step future and we
denote their equivalence class as state $B$.

States $A$ and $B$ partition of the entire space of allowable histories.
The fact that the equivalence class of a history is determined solely by the
last symbol is reflected by the time series of symbols having Markov order $1$.
The reader should verify that, in this particular example, Markovity also
means that the partition obtained by examining one-step futures is
equivalent to the partition obtained by examining arbitrary $L$-step futures.
From this, we see that $\ForwardCausalStateSet = X_{:0} / \sim^+$ consists of:
\begin{align*}
  A &= \{ \ldots 0, \:\ldots 2 \} ~\text{and}\\
  B &= \{ \ldots 1 \} ,
\end{align*}
where an ellipsis stands for any valid past.

The partition is represented graphically in the matrix at the top-right of
Fig.~\ref{fig:IrreversibleExample}.
In it, we independently rearranged the histories and futures so as
to cluster the block-structures within the matrix. For each history $X_{-1}$,
the distribution over futures $X_0$ is (topologically) represented as a column.
Histories with the same column colorings belong to the same equivalence class
under the forward equivalence relation $\sim^+$.
Finally, note that the futures are not partitioned by the forward equivalence
relation since $X_0=1$ is allowable from both $A$ and $B$.

To understand the reverse causal states, we examine the distribution of symbols
\emph{preceding} the future. Since the Markov order does not change when
analyzing the time series in the reverse direction (App. \ref{app:markov}), the
equivalence class of a future is determined solely by the first symbol of the
future.  Additionally, equality of distributions over length-$1$ histories
implies equality over arbitrary length-$L$ history distributions. Thus,
for $X_{-1} = (0,1,2)$ conditioned on a length-$1$ future, we have:
\begin{align*}
  \Pr(X_{-1}|X_{0}=0) &= (1/2,0,1/2) ,\\
  \Pr(X_{-1}|X_{0}=1) &= (1/4, 1/2,1/4) ,~\text{and}\\
  \Pr(X_{-1}|X_{0}=2) &= (0,1,0) .
\end{align*}

Any word starting with symbol $0$ can only be preceded by symbols $0$ or $2$
with probability $1/2$ each, but never with symbol $1$. Correspondingly, all
futures starting with symbol $0$ are equivalent and their equivalence class is
denoted as reverse causal state $C$. Furthermore, any word starting with symbol
$1$ is preceded by symbols $0$ or $2$ with probability $1/4$
each or is preceded by
symbol $1$ with probability $1/2$. All futures starting with symbol $1$ are
equivalent with respect the distribution of preceding symbols and subsumed as
reverse causal state $D$. Finally, words starting with symbol $2$ can only be
preceded by symbol $1$. The equivalence class of futures starting on symbol
$2$ is denoted reverse causal state $E$. From this, we see that
$\ReverseCausalStateSet = X_{0:} / \sim^-$ consists of:
\begin{align*}
  C &= \{ 0\ldots \} ,\\
  D &= \{ 1\ldots \} , ~\text{and}\\
  E &= \{ 2\ldots \} ,
\end{align*}
where an ellipsis now stands for any valid future.

States $C$, $D$, and $E$ partition the space of allowable futures. They are
represented in the lower-right matrix of Fig.~\ref{fig:IrreversibleExample}.
In it, we rearranged
the histories and futures so as to cluster the block-structures
within the matrix. For each future $X_0$, the distribution over histories
$X_{-1}$ is (topologically) represented as a row. Each row coloring is distinct,
reflecting the fact that each future belongs to a distinct reverse causal state
under the reverse equivalence relation $\sim^-$. Finally, note that the
histories are not partitioned by the reverse equivalence relation since
$X_{-1}=0$, for example, is allowable from both $C$ and $D$.

Note how the space of histories is partitioned into only two equivalence
classes, while the space of futures is partitioned into three equivalence
classes. Any first-order Markov chain on $k$ symbols has at most $k$ causal
states. That we only have two forward causal states is due to the
fact that the future distributions after seeing symbols $2$ and $0$ are
equivalent. This equivalence, however, does not hold in the reverse direction
and, so, there are three reverse causal states.
The asymmetry is further exemplified by the forward \eM\ having smaller
statistical complexity than the reverse \eM:
$C_\mu^+=1 ~\text{bit}~ < C_\mu^-=3/2 ~\text{bit}$.
For this particular process, it takes $1/2$ bit more memory, on average, to
generate the \emph{same} string of symbols from right to left than
from left to right.

Comparing the causal states as represented in
Fig.~\ref{fig:IrreversibleExample}, we see that each
equivalence relation also defines a partition over the set $(X_{-1},X_0)$.
This, in turn, extends to a partition over bi-infinite strings. So, we can
think of the forward (reverse) \eM\ as the restriction of this
partition to the set of histories (futures). The partition over the bi-infinite
strings must be such that when it is restricted to histories (futures) it
induces a unifilar dynamic over equivalence classes.  This particular point
will be important when we discuss the bidirectional machine in
Sec.~\ref{sec:bieM}.

\subsubsection{Probability-Driven Irreversibility}
\label{sec:ProbDrivenIrreversibility}

In our second example, we show that irreversibility can have purely
probabilistic origins.  We do this with an irreversible, order-$2$ Markov
process that has a reversible support. Figure~\ref{fig:ProbDrivenIrreversibility}
presents the recurrent components of the forward and reverse \eMs,
$\ForwardEM$ and $\ReverseEM$.  To see that the support is reversible, note
that the \eM\ structures, without probabilities, are equal:
$M^+_\emptyset = M^-_\emptyset$. This implies that $\ForwardDFA = \ReverseDFA$,
but it can also be seen directly since the topologies, in this example, are
already minimal deterministic finite automata. The practical consequence of
having a reversible support is that $\Pr(w) > 0$ if and only if
$\Pr(\widetilde{w}) > 0$.

\begin{figure}
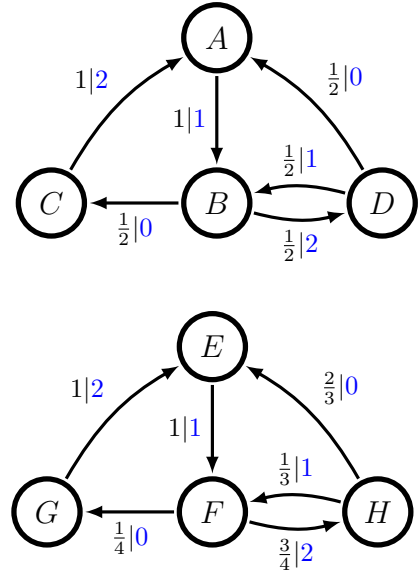

\noindent%
\centering

  \ifgenerate
    \ifcache
      \tikzsetnextfilename{probdriven_feM}
    \fi
    \input{\tikzpathprobdriven_feM.tikz}
  \else
    \includegraphics{\tikzpathexternalprobdriven_feM}
  \fi

\\\vspace{.2in}\noindent

  \ifgenerate
    \ifcache
      \tikzsetnextfilename{probdriven_reM}
    \fi
    \input{\tikzpathprobdriven_reM.tikz}
  \else
    \includegraphics{\tikzpathexternalprobdriven_reM}
  \fi

  \caption{The forward (top) $M^+$ and reverse (bottom) $M^-$ \eMs\ of an
    irreversible, order-$2$ Markov process. The process is irreversible since
    $\ForwardEM \neq \ReverseEM$. However, the support is reversible since the
    underlying topologies of each \eM\ are the same: $M^+_\emptyset =
    M^-_\emptyset$. }
\label{fig:ProbDrivenIrreversibility}
\end{figure}

Beginning with the forward causal states, we examine the distribution of symbols
that succeed histories. Since the process is order-$2$ Markovian, we calculate
finite histories and futures instead of semi-infinite histories and futures.
Specifically, partitioning length-$2$ histories based on the conditional
distributions of length-$2$ futures yields the same result as partitioning
semi-infinite histories based on the conditional distributions of arbitrary
length futures~\footnote{In this example, it is sufficient to consider
length-$1$ futures, but we use length-$2$ futures in order to demonstrate
the general technique.  That is, the columns of the matrix representing the
conditional distribution must be marginalized in order to obtain the transition
probabilities of the \protect\eM.}. We directly calculate
$\Pr( X_0,X_1 | X_{-2},X_{-1})$ as a right-stochastic matrix, finding:
\begin{align*}
  &\Pr( X_0,X_1 | X_{-2},X_{-1} ) \\
  &\quad=\bordermatrix{%
      & 01 & 02 & 10 & 12 & 20 & 21\cr
   01 & 0 & 1/2 & 0 & 0 & 1/4 & 1/4\cr
   02 & 0 & 0 & 1/2 & 1/2 & 0 & 0\cr
   10 & 0 & 0 & 0   & 0   & 0 & 1\cr
   12 & 1/2 & 0 & 1/4 & 1/4 & 0 & 0\cr
   20 & 0 & 0 & 1/2 & 1/2 & 0 & 0\cr
   21 & 0 & 1/2 & 0 & 0 & 1/4 & 1/4
  } .
\end{align*}
The forward causal states are groupings of histories and, in this presentation,
they correspond to groupings of identical rows. For example, the rows
corresponding to $01$ and $21$ are identical and, so, are grouped into the same
equivalence class. Translating these history suffixes back into semi-infinite
histories, we find that $\ForwardCausalStateSet = X_{:0} / \sim^+$ consists of:
\begin{align*}
  A &= \{ \ldots 02, \:\ldots 20 \} , \\
  B &= \{ \ldots 01, \:\ldots 21 \} , \\
  C &= \{ \ldots 10 \} , ~\text{and} \\
  D &= \{ \ldots 12 \} .
\end{align*}

The reverse causal states are similarly obtained, but now we consider the
distribution of symbols that precede futures. Once again, we work with
finite-length histories and futures. Using a right-stochastic matrix,
we calculate $\Pr(X_{-2},X_{-1} | X_0,X_1 )$ directly as:
\begin{align*}
  &\Pr( X_{-2},X_{-1} | X_{0},X_{1} ) \\
  &\quad=\bordermatrix{%
      & 01 & 02 & 10 & 12 & 20 & 21\cr
   01 & 0 & 0 & 0 & 1 & 0 & 0\cr
   02 & 1/4 & 0 & 0 & 0 & 0 & 3/4\cr
   10 & 0 & 1/2 & 0   & 1/4   & 1/4 & 0\cr
   12 & 0 & 1/2 & 0   & 1/4   & 1/4 & 0\cr
   20 & 1/4 & 0 & 0 & 0 & 0 & 3/4\cr
   21 & 1/12 & 0 & 2/3 & 0 & 0 & 1/4
  } .
\end{align*}
The reverse causal states are groupings of futures, and this corresponds to
groupings of identical rows in the matrix. Translating these
future prefixes into semi-infinite futures, we find that the reverse causal
states $\ReverseCausalStateSet = X_{0:} / \sim^-$ consist of:
\begin{align*}
  E &= \{ 02\ldots, \: 20\ldots \} , \\
  F &= \{ 10\ldots, \: 12\ldots \} , \\
  G &= \{ 01\ldots \} , ~\text{and} \\
  H &= \{ 21\ldots \} .
\end{align*}

Since there are multiple perspectives involved, we detour briefly to
translate the matrix $\Pr(X_{-2},X_{-1} | X_0,X_1)$ onto the reverse \eM\
shown in Fig.~\ref{fig:ProbDrivenIrreversibility}. One perspective, the global
perspective, is the process lattice of Fig.~\ref{fig:ProcessLattice}
that defines \emph{forward} as a left-to-right movement and \emph{reverse}
as a right-to-left movement. The other perspective, the local perspective, is
from the \eM's vantage point that is concerned only with its own \emph{local}
time. That is, the causal-state dynamic always proceeds ``forward'' in time,
irrespective of how forward is defined in the global perspective. For the
reverse \eM, this means its outgoing transitions translate to right-to-left
movements on the lattice. To demonstrate, consider the element:
\begin{align*}
  \Pr(X_{-2} = 2, X_{-1} = 1 | X_0 = 2,X_1 = 0) = 3/4 .
\end{align*}
The joint word is $x_{-2}x_{-1}x_0x_1 = 2120$.  To verify that this is a valid
word in the process, one scans the word from right-to-left following
transitions on $M^-$.  Focusing only on $x_0x_1=20$, if we begin in reverse
causal state $F$, then we transition to state $G$ on symbol $0$ and, finally, to
state $E$ on symbol $2$. This is precisely the statement of the reverse
causal-state partition: any future beginning with $20$ leads (when scanned from
right-to-left) to reverse causal state $E$. Continuing from $E$, we see
$x_{-2}x_{-1} = 21$ first by transitioning to state $F$ on symbol $1$ and then
again to state $H$ on symbol $2$. The total probability of this conditional path
is $3/4$.

To understand where the irreversibility arises, we first note that the matrix
$\Pr(X_0,X_1|X_{-2},X_{-1})$ implicitly contains the information about
the dynamic over the forward causal states. For example, from
$x_{-2}x_{-1} = 01$, we can see $x_0x_1=20$ and $x_0x_1=21$ each with probability
$1/4$.  Marginalizing and using the forward causal-state partition, this means
that state $B = \epsilon^+(\ldots 01)$ can see symbol $2$ with probability $1/2$
and when it does, we transition to state $D = \epsilon^+(\ldots 012) = \epsilon^+(\ldots 12)$.

Our goal is to understand why the edge from $F$ to $H$ on symbol $2$ occurs with
probability $3/4$ instead of probability $1/2$~\footnote{It is much easier to
see that the forward and reverse \protect\eMs\ are irreversible if matrix
$A \equiv \Pr(X_0,X_1|X_{-2},X_{-1})$ is compared to matrix
$B \equiv \Pr(\protect\widetilde{X}_0,\protect\widetilde{X}_1 |
              \protect\widetilde{X}_{-2},\protect\widetilde{X}_{-1})$,
instead of to matrix $C \equiv \Pr(X_{-2},X_{-1}|X_{0},X_{1})$. Matrices $A$
and $B$ are in the local time perspective and, thus, their forms are directly
comparable. Matrix $C$, in contrast, is in the global (lattice)
perspective of Fig.~\ref{fig:ProcessLattice} and requires index manipulation
to see that the resultant dynamics are irreversible.}.
From the reverse causal-state partition,
any future beginning with $X_0X_1=10$ will lead into state
$F = \epsilon^-(10\ldots)$.  If we then see $x_{-1} = 2$, we move to state
$H = \epsilon^-(210\ldots)$. In the matrix for
$\Pr(X_{-2},X_{-1} | X_0,X_1 )$, we now look at the row labeled $10$. There, the
columns labeled $02$ and $12$ correspond to histories with $x_{-1} = 2$.  The
probabilities are $1/2$ and $1/4$, respectively, which sum to $3/4$. So,
indeed, the process is irreversible, despite having a reversible support.

\subsubsection{Explosive Irreversibility}
\label{sec:FiniteToInfinite}

Our final example shows that, although a process can be represented by a
finite number of causal states in one direction, its presentation in the
reverse direction may require a countably infinite number of states. The
support of this process language corresponds to a strictly sofic
shift~\cite{Cove75} and, thus, the process is not Markovian. The consequence
is that we must use a hidden Markov model representation if we want to
represent it finitely, at least in the forward
direction~\footnote{Note that since the forward \protect\eM\ is finite, the
process \emph{does} have a finite reverse generator---namely, the time-reversed
forward \protect\eM. However, the minimality of the \protect\eM, within the
class of unifilar HMMs, ensures that this presentation can be smaller than
the reverse \protect\eM\ only if it is also nonunifilar.}. The recurrent
components of the forward and reverse \eMs\ are shown in
Fig.~\ref{fig:infiniteeM}.

\begin{figure*}
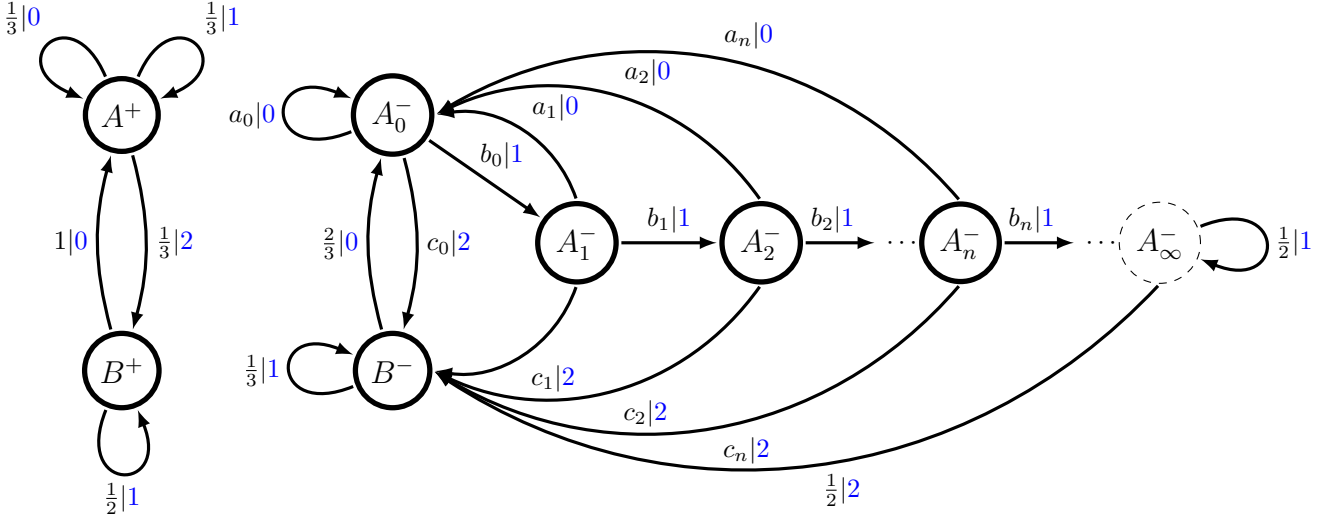

\centering

  \ifgenerate
    \ifcache
      \tikzsetnextfilename{explosive}
    \fi
    \input{\tikzpathexplosive.tikz}
  \else
    \includegraphics{\tikzpathexternalexplosive}
  \fi

\caption{Explosive irreversibility: Despite the forward \eM\ $M^+$ (left)
  having just two recurrent causal states, the reverse \eM\ $M^-$ (right)
  has a countable infinity of recurrent causal states. Transitions
  for $M^-$ make use of:
$a_n = 2^{n+1}(3z_n)^{-1}$,
$b_n = 1 - (a_n + c_n)$,
$c_n = 3^n (2z_n)^{-1}$, and
$z_n = 2^n + 3^n$.
  The dashed state labeled $A^-_\infty$ is an \emph{elusive} causal
  state~\cite{Uppe97a}; it is infinitely preceded, but neither reachable
  (from the omitted start state) nor recurrent. See App. \ref{app:infiniteeM}.
  }
\label{fig:infiniteeM}
\end{figure*}

Let us again study the distribution of symbols \emph{succeeding} histories.
Since the process is not Markovian, we cannot expect to obtain the causal states
by examining finite-length histories. And so, we must focus attention on
semi-infinite histories and their suffixes.

The presence of synchronizing words~\cite{Jame10a} makes the analysis a bit
easier. In this example, $w=0$ and $w=2$ are minimal synchronizing words and,
so, after observing one of these words, the state of the \eM\ is known with
certainty. \EM\ unifilarity then guarantees that on each next symbol we will
\emph{still} know the state of the machine.  This allows us to read the
distribution of $X_0$ directly off the forward \eM's outgoing edges.

Thus, any history ending in symbol $0$ will be followed by symbols $0$,
$1$, or $2$ with probability $1/3$ each.  The equivalence class of
histories containing $X_{:0} = \ldots 0$ will be denoted forward
causal state $A^+$.  Looking at
the machine, we see that the distribution of next symbols remains unchanged
whenever we see a $1$ from state $A^+$. So, any history ending in a $0$ followed
by an arbitrary, but finite number of $1$s also belongs to equivalence class $A^+$.
Similarly, any history ending with $2$ will be followed by symbols $1$ or $0$
with probability $1/2$ each. The equivalence class of histories ending in $2$
will be denoted forward causal state $B^+$ and, from the machine, we can also see
that $B^+$ includes any history ending with a $2$ followed by an
arbitrary, but
finite number of $1$s. A history consisting entirely of the symbol $1$ is best
understood by taking the limit of finite histories which also consist entirely
of $1$s. When one does this, the history will be followed by symbols $0$ or
$1$ with probability $1/2$ each. Concretely, for $X_0 = (0,1,2)$ and
$k \geq 0$, the conditional distributions for every valid history are:
\begin{align*}
  \Pr( X_0 | X_{:0} = \ldots 01^k) &= (1/3, 1/3, 1/3) ,\\
  \Pr( X_0 | X_{:0} = \ldots 21^k) &= (1/2, 1/2, 0)  , ~\text{and}\\
  \Pr( X_0 | X_{:0} = 1^\infty) &= (1/2, 1/2, 0) .
\end{align*}
And, from this, we see that $\ForwardCausalStateSet = X_{:0} / \sim^+$ consists of:
\begin{align*}
A^+ &= \{ \ldots 01^k \} ~\text{and}\\
B^+ &= \{ \ldots 21^k, 1^\infty \} .
\end{align*}

The distribution of symbols \emph{preceding} futures is more complicated.
First, we consider futures beginning with $1^k2$, $k \geq 0$. These futures
cannot be preceded by symbol $2$. The probability of observing a $0$ or
another $1$ preceding these futures is $2/3$ and $1/3$, respectively. We
denote the equivalence class of all futures starting with $1^k2$ as reverse
causal state $B^-$. Now, consider all words starting with $1^k0$, an arbitrary
number of $1$s followed by $0$. A short calculation shows that such words can
be preceded by a $0$, $1$, or $2$ with the probability depending
\emph{explicitly} on the number of $1$s at the beginning of the future. Thus,
there is one reverse causal state for every $k$, and we denote these states
as $A^-_k$. As before, the future consisting entirely of $1$s is most easily
understood by taking limits; one finds that it is not possible to precede the
future with a $0$ and that $1$ and $2$ precede the future with
probability $1/2$ each.
This limiting distribution coincides with $\lim_{k\to\infty} A^-_k$ and, so, we label its
equivalence class $A^-_\infty$. Formally, for $X_{-1} = (0,1,2)$ and $k \geq 0$,
the conditional distributions for every valid future are:
\begin{align*}
  \Pr( X_{-1} | X_{0:} = 1^k 2 \ldots) &= (2/3, 1/3, 0) ,\\
  \Pr( X_{-1} | X_{0:} = 1^k 0 \ldots) &= \textstyle
    \frac{\left(2^{k+2}, \, 2^{k+1} + 3^{k+1}, \,
	3^{k+1}\right)}{6(2^k + 3^k)} , ~\text{and}\\
  \Pr( X_{-1} | X_{0:} = 1^\infty ) &= (0, 1/2, 1/2) .
  \end{align*}
From this, we see that
$\ReverseCausalStateSet = X_{0:} / \sim^-$ consists of:
\begin{align*}
  A^-_0 &= \{0\ldots\} ,\\
  A^-_1 &= \{10\ldots\} ,\\
  A^-_2 &= \{110\ldots\} ,\\
  &\hspace{.08in}\vdots\\
  A^-_k &= \{1^k0\ldots\} ,\\
  &\hspace{.08in}\vdots\\
  A^-_\infty &= \{1^\infty\}\\
  B^- &= \{1^k2\ldots\} .
\end{align*}

Again, we leave it to the reader to verify that, for this particular example,
a partition of futures into equivalence classes with respect to the
preceding symbol will not change when considering longer strings of preceding
symbols.

The reverse causal states can also be obtained by applying the
forward causal-state equivalence relation on the time-reversed HMM of the
forward \eM.  That is, $(X_{0:} / \sim^-) \cong (\widetilde{X}_{:0} / \sim^+)$.
For example, reverse causal state $A^-_1$ contains every future beginning with
$10$.  Alternatively, we can associate $A^-_1$ with ``histories'' ($\widetilde{X}_{:0}$)
that end $01$.  Since the support is reversible, this allows for a direct
comparison to the forward causal-state partition, and so $A^-_1$ is a subset of
forward causal state $A^+$.  We summarize the relationship~\footnote{This
relationship is a comparison between the forward and reverse causal states only.
To each forward causal state, there is a 1-1 correspondence between its
histories and the union of futures from reverse causal states. Note that
this relationship says little about how the partitions are correlated in time.
For that, one must consider $\Pr(\ForwardCausalState | \ReverseCausalState)$.
See App.~\ref{app:infiniteeM}.} between the
partitions as follows:
\begin{align*}
  A^+ &\cong A^-_0 \cup A^-_1 \cup \cdots \cup A^-_k \cup \cdots ,\\
  B^+ &\cong B^- \cup A^-_\infty~.
\end{align*}

Recall, $M^+_\emptyset$ and $M^-_\emptyset$ denote the forward and reverse
DFAs whose structure is defined by the forward and reverse \eMs\ without
probabilities.  In this example, $M^+_\emptyset \neq M^-_\emptyset$ since they
disagree on the number of states. However, $M^-_\emptyset$ is not minimal
and would be equal to $M^+_\emptyset$, if it were minimized. This means that
the support of the process \emph{is} reversible: $\ForwardDFA = \ReverseDFA$.
Thus, this example also demonstrates probability-driven irreversibility, but
differs from the example in Sec.\ref{sec:ProbDrivenIrreversibility}, which
had $M^+_\emptyset = M^-_\emptyset$.

This example demonstrated that the \eMs\ of irreversible processes can be
finite in one direction and infinite in the other. The process has
$\ForwardCmu \approx 0.971$ and $\ReverseCmu \approx 1.589$ and, so once again,
we see that it takes more memory to generate the process from right-to-left
than from left-to-right.

\subsection{Survey of Irreversibility}

Reference~\cite{Crut10a} classified the space of hidden Markov models in terms
of unifilarity, synchronization, and minimality. Figure~\ref{fig:hmmspace}
reproduces the essential components of the hierarchy presented there, extending
it several ways \footnote{In Fig.~\ref{fig:hmmspace}, we stress that some atoms
may have zero measure. For example, every \protect\eM\ with uniformly distributed
transition probabilities is exactly synchronizing. Thus, the atom representing
hidden Markov models with uniformly distributed transition probabilities that
are simultaneously minimal unifilar and not exactly synchronizing is empty.}.

\begin{figure}
\centering

  \ifgenerate
    \ifcache
      \tikzsetnextfilename{hmmspace}
    \fi
    \input{\tikzpathhmmspace.tikz}
  \else
    \includegraphics{\tikzpathexternalhmmspace}
  \fi

\caption{Structural classification of hidden Markov models: Presentations within
  the green ellipse correspond to the recurrent \eMs.  The shaded area is
  the subset of recurrent \eMs\ that are exactly synchronizing and,
  additionally, have uniformly distributed transitions probabilities on the
  outgoing edges of each state. This subset defines the \emph{topological
  \eMs}. Areas in the diagram are not drawn to scale and only show which
  classes are contained in other classes.}
\label{fig:hmmspace}
\end{figure}

At the outer-most level, outside the dashed ellipse in Fig.~\ref{fig:hmmspace},
we have hidden Markov models that are strictly nonunifilar. So, given the
current state and symbol, there is residual uncertainty in the next state:
$H[\AlternateState_1 | \AlternateState_0, \MeasSymbol_0] > 0$. Moving inside
the dashed ellipse we encounter the strictly unifilar hidden Markov models for
which this quantity is exactly zero. Unifilarity is an important property since,
among other reasons, it allows one to calculate the process's
entropy rate directly from the presentation.

However, unifilar hidden Markov models can have a type of redundancy such that the state is not justified by the process statistics. Such models have gauge information
$\GI = H[\AlternateState_0 | X_{:0}, X_{0:}] > 0$. And, when we restrict
to those with $\GI = 0$, the hidden Markov models become
asymptotically synchronizing~\footnote{Reference~\cite{Crut10a} called this
class \vocab{weakly asymptotically synchronizing}, but it turns out to be
equivalent to (strongly) \emph{asymptotically synchronizing} \cite{Trav11a}.}.
This class exists within the dotted ellipse of Fig.~\ref{fig:hmmspace}. One
signature of unifilar models with zero gauge information is that the state
uncertainty vanishes asymptotically for almost every history in the process
language~\cite{Trav10b}.

Within the class of asymptotically synchronizing hidden Markov models, there
exists a subset for which the state uncertainty vanishes in finite time for
almost every history in the process language~\cite{Trav10a}. Such hidden Markov
models necessarily have at least one synchronizing word. In
Fig.~\ref{fig:hmmspace}, this is delineated by the blue ellipse.

Another subset within the class of asymptotically synchronizing hidden Markov
models are the minimal unifilar hidden Markov models.  Any hidden Markov
model with these properties corresponds to an \eM\ of a process
language~\cite{Trav10d}. This is represented by the
green ellipse in Fig.~\ref{fig:hmmspace}. Generally, the set of \eMs\ and
the set of exactly synchronizing hidden Markov models (blue
ellipse) are not the same, and
their intersection defines the class of exactly synchronizing \eMs.

Reference~\cite{Crut10a}'s classification of processes and their presentations
provides a natural setting for developing a refined classification based on the
irreversibility properties just introduced. As a first step, though, it is
perhaps more helpful to develop a quantitative appreciation of how common
irreversibility is within the space of hidden Markov models. This is a
difficult, if somewhat open-ended challenge, but we can make some progress by
examining several subclasses. Systematically surveying presentations is
generally difficult due to the probabilistic nature of hidden Markov models and
the processes they generate. However, if we restrict ourselves to hidden Markov
models with uniformly distributed transition probabilities leaving each
state---recall the red, wavy parabola in Fig.~\ref{fig:hmmspace}---then we can
systematically enumerate them. Essentially, the task boils down to enumerating
a particular class of finite-state automata. Reference~\cite{Johnson2010a}
provided an exhaustive enumeration of exactly synchronizing \eMs\ with
uniformly distributed transition probabilities leaving each state. It is this
class of processes---generated by the topological \eMs---that we survey in order to develop
an appreciation of how common irreversibility is within the space of hidden Markov models.

Table \ref{tab:topoeMs} summarizes the survey, giving the number
$N_{n,k}$ of topological \eMs\ \cite{Johnson2010a} and the number $C_{n,k}$
of irreversible \eMs\ over $n$ states and exactly $k$ symbols in the alphabet.
(By ``exactly $k$ symbols'' we emphasize that we excluded from the counts
processes with $k = 3$ that use only $2$ symbols, for example.)
The immediate impression is quite striking: Irreversibility dominates. It
comprises over $98\%$ of all topological \eMs\ and their
associated processes. Indeed, the fraction of irreversible
\eMs\ appears to rapidly increase toward unity as the number of states increases.
And so, what might have initially appeared to be a counterintuitive
property---temporal asymmetry in the statistics of a stationary process---is
the overwhelming rule in the space of processes.

\begin{table*}
  \centering
  \begin{tabular}{|c|rr|rr|rr|rr|rr|}
  \hline
  $n\backslash k$ & $N_{n,2}$ & $C_{n,2}$ & $N_{n,3}$ & $C_{n,3}$
  & $N_{n,4}$ & $C_{n,4}$ & $N_{n,5}$ & $C_{n,5}$ & $N_{n,6}$ & $C_{n,6}$ \\ \hline
  1 & 1 & 0 & 1 & 0 & 1 & 0 & 1 & 0 & 1 & 0 \\
  2 & 7 & 0 & 120 & 84 & 1,351 & 1,200 & 12,900 & 12,290 & 113,827 & 111,390 \\
  3 & 78 & 24 & 15,364 & 14,561 & 1,596,682 & 1,586,736 \\
  4 & 1,388 & 1,077 & 3,621,474 & 3,607,084 \\
  5 & 35,186 & 33,107\\
  6 & 1,132,613 & 1,119,623
  \end{tabular}
\caption{The number $N_{n,k}$ of topological \eMs\ \cite{Johnson2010a} and the
  number $C_{n,k}$ of irreversible \eMs\ over $n$ states and exactly
  $k$ symbols in the alphabet.
  }
\label{tab:topoeMs}
\end{table*}

\section{The Bidirectional Machine}
\label{sec:bieM}

The process, as a stationary probability space, is a bulky abstraction, and
state-based models, such as hidden Markov models, are often used to provide
a much more concise representation.  However, the forward and reverse
generators of a process are not unique, and this makes it difficult to separate
structure in the process from structure in presentations of the process.
The entropy rate $\hmu$ and excess entropy $\EE$ are two well known structural
properties of a process. We showed, in addition, that crypticity $\PC$,
oracular information $\OI$, and gauge information $\GI$ are important
structural properties of presentations.

The forward and reverse \eMs\ were introduced as a process's canonical
presentations and, in doing so, the statistical complexities $\Cmu^+$ and
$\Cmu^-$ became process properties that, in addition, were easily accessible
through these privileged presentations. The \eMs\ were ideal in a number of
ways, for example and importantly, they provided a direct calculation of a
process's entropy rate. The excess entropy, however, remained inaccessible
and, so, a new presentation was required.

The bidirectional machine, introduced in Refs.~\cite{Crut08a, Crut08b}, is
a generator that unites the forward and reverse \eMs, providing an
explicit accounting of the relationship between them~\footnote{Given the forward
\protect\eM\ of a process, one can construct its reverse \protect\eM\ using
the technique described in Ref.~\cite{Crut08b}. From this construction, we
learn how the forward and reverse causal states are related. However, if one is
given \emph{only} the reverse \protect\eM, then this important information
is lost and must be deduced again. The bidirectional machine is a presentation
of the process that preserves this information.}. In doing so,
the excess entropy, a structural property of a process, becomes accessible
through a simple calculation and, further, the bidirectional machine
\emph{contains} all information necessary to reconstruct the forward and
reverse \eMs. In this section, we define the bidirectional machine and
interpret it through an example from the previous section.

\subsection{Definitions}
The hidden process lattice of Fig.~\ref{fig:ProcessLattice} invites us to
consider a dynamic over joint causal states. We define an aggregate
state $\CausalState^\pm \equiv (\CSjoint)$ as the $2$-tuple of the
forward and reverse causal states with stationary distribution function:
\begin{align*}
 \pi(\alpha\gamma) & \equiv \Pr\left(\CausalState^\pm = (\alpha, \gamma) \right)\\
	& = \pi(\alpha,\gamma) \\
    & \equiv \Pr(\CausalState^+ = \alpha, \CausalState^- = \gamma),
\end{align*}
for $\alpha \in \CausalState^+$ and $\gamma \in \CausalState^-$. Counter
to typical usage $\pm$ in the joint causal state is interpreted as forward
\emph{and} reverse, rather than \emph{or}. Note, that we purposefully overload
notation and use $\pi$ again, but it will always be clear from context to
which generator we refer.

Given the (stationary) distribution $\pi$, if we scan left-to-right, we obtain
a forward generator $M^\pm$ of the process. If we scan right-to-left, we obtain
the process's reverse generator $M^\mp$. These generators are generally
distinct.  However, we will see that $M^\mp$ is equal to the time-reversed
HMM of $M^\pm$.  That is, $M^\mp = \widetilde{M}^\pm$. For that reason, we take
$M^\pm$ as the starting point.

Having defined the states, the transition matrices for the
\vocab{forward bidirectional machine} $M^\pm$ are given by:
\begin{align}
\label{eq:bieM}
  T_x(\alpha\gamma, \beta\delta)
  &\equiv \Pr\left( X_0=x, \CausalState_1^\pm = (\beta,\delta) \,|\,
                    \CausalState_0^\pm = (\alpha,\gamma) \right) \nonumber\\
  &= \begin{cases}
       \widetilde{T}_x(\gamma, \delta) & \text{if } T_x(\alpha, \beta) > 0,\\
       0 & \text{otherwise},
     \end{cases}
\end{align}
where $\alpha,\beta \in \CausalStateSet^+$ and
$\gamma,\delta \in \CausalStateSet^-$.  The transition probabilities of the
forward bidirectional machine mimic the transition probabilities of the
time-reversed reverse \eM\ ($\widetilde{M}^-$), provided the transition
is allowed in the forward \eM\ ($M^+$).

To see how Eq.~\eqref{eq:bieM} arises, first we note that:
\begin{align}
\label{eq:bieMfactoring}
& \hspace{-.1in}
  \Pr\left( X_0, \CausalState_1^\pm  \,|\,
                   \CausalState_0^\pm \right) \nonumber \\
&= \Pr\left( X_0, \CausalState_1^+, \CausalState_1^- \,|\,
                   \CausalState_0^+, \CausalState_0^- \right) \nonumber \\
&= \Pr\left( \CausalState_1^+ \,|\,
             \CausalState_0^+, \CausalState_0^-, X_0, \CausalState_1^-\right)
   \Pr(X_0, \CausalState_1^- \,|\, \CausalState_0^+, \CausalState_0^-) ~.
\end{align}
Following Eq.~\eqref{eq:bieM}, we take $\CausalState_0^+=\alpha$,
$\CausalState_1^+=\beta$, $\CausalState_0^-=\gamma$, $\CausalState_1^-=\delta$,
and $X_0 = x$.  Then, the first factor in Eq.~\eqref{eq:bieMfactoring} is
either $0$ or $1$, due to unifilarity of the forward \eM, depending on
if $\beta$ is the unique causal state that follows $\alpha$ on symbol $x$.
The presence of $\CausalState_0^- = \gamma$ and $\CausalState_1^- = \delta$
in the conditional does not change this fact, so long as $(\gamma,x,\delta)$ is
a valid consecutive combination in the reverse \eM---and this is implicitly
handled by the second factor.

The second factor reduces due to the shielding property of hidden Markov models:
the \vocab{past} and \vocab{future} are independent given the \vocab{present}
state. Focusing on the reverse \eM, we express independence formally as:
\begin{align*}
  \Pr( X_{0:}^{\phantom{+}}, \ReverseCausalState_{1:} |
       X_{:0}^{\phantom{+}}, \ReverseCausalState_{:0}, \ReverseCausalState_0)
   = \Pr( X_{0:}^{\phantom{+}}, \ReverseCausalState_{1:} |
          \ReverseCausalState_0)
\end{align*}
where $(X_{0:}^{\phantom{+}}, \ReverseCausalState_{1:})$ is everything related
to the future and $(X_{:0}^{\phantom{+}}, \ReverseCausalState_{:0})$ is
everything related to the past.  Now, we also know that the forward causal
states are determined by the past:
$H[\ForwardCausalState_0 | X_{:0}^{\phantom{+}}] = 0$, and this means that the
forward causal state and future are independent given the past causal
state~\footnote{Let $A$, $B$, $C$, and $D$ be random variables
such that $A$ maps deterministically onto $D$.  Further, suppose that $A$ and
$B$ are independent given $C$. Then it follows that $D$ and $B$ are also
independent given $C$.}:
\begin{align*}
  \Pr( X_{0:}^{\phantom{+}}, \ReverseCausalState_{1:} |
       \ForwardCausalState_0, \ReverseCausalState_0)
   = \Pr( X_{0:}^{\phantom{+}}, \ReverseCausalState_{1:} |
          \ReverseCausalState_0)
\end{align*}
Restricting to single-step futures, we obtain:
\begin{align}
\label{eq:shielding}
  \Pr(X_0, \CausalState_1^- \,|\, \CausalState_0^+, \CausalState_0^-)
  =
  \Pr(X_0, \CausalState_1^- \,|\, \CausalState_0^-).
\end{align}
Understanding this in terms of previously defined quantities is
subtle precisely due to the shifting notions of forward and reverse
time. Intuitively, Fig.~\ref{fig:ProcessLattice} shows that we are asking the
reverse \eM\ to move left-to-right.  This direction is opposed to the
reverse \eM's local notion of forward time. Thus, we expect this movement from
left-to-right to relate to the time-reversed transition matrices of the
reverse \eM.

At a lower level, we note that the definition, Eq.~\eqref{eq:treM}, of the
time-reversed hidden Markov model was stated under the assumption that the
original model's \emph{increasing} indexes corresponded to a left-to-right
movement on the lattice. From the labeling in Fig.~\ref{fig:ProcessLattice},
the reverse \eM\ does not satisfy this assumption, and a proper translation of
Eq.~(\ref{eq:treM}) is:
\begin{align}
  \nonumber
  \widetilde{T}_x(\gamma,\delta)
  &\equiv \Pr(X_0=x, \CausalState^-_1=\delta\,|\, \CausalState^-_0=\gamma) \\
  &= \frac{ \pi(\delta) T_x(\delta, \gamma) }{ \pi(\gamma) },
\end{align}
which is exactly the quantity in question. The result is that whenever
$T_x(\alpha,\beta) > 0$, then the transition probability of the forward
(left-to-right) bidirectional machine is determined by the transition
matrices of the time-reversed reverse \eM\ ($\widetilde{M}^-$).

\begin{figure}
\centering
\includegraphics{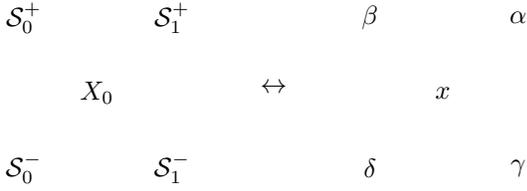}\\
\caption{
  Left: Portion of the process lattice relevant to the
  bidirectional machine's transition matrices. Right: Realizations
  of the process lattice as it applies to Eqs. \eqref{eq:trbieM1},
  \eqref{eq:trbieM2pre}, \eqref{eq:trbieM2}, \eqref{eq:samepath1}, and
  \eqref{eq:samepath2}.
}
\label{fig:Lattice2}
\end{figure}

The \emph{reverse bidirectional machine} $M^\mp$ is analogously defined by the
right-to-left dynamic over the joint causal states. This requires that the
forward \eM\ move right-to-left on the lattice, a direction that is opposed to
its local sense of forward time. The result is that we use the time-reversed
forward \eM\ ($\widetilde{M}^+$). Similarly, the reverse \eM\ is required to
move right-to-left on the lattice. This direction is in agreement with its
local sense of forward time and so, we utilize the reverse \eM\ ($M^-$) as is.
For $\alpha, \beta \in \CausalStateSet^+$ and $\gamma, \delta \in
\CausalStateSet^-$, we have:
\begin{align}
  \widetilde{T}_x(\alpha\gamma,\beta\delta)
  &\equiv
  \Pr\left( X_0=x, \CausalState_0^\pm = (\beta,\delta) \,|\,
                   \CausalState_1^\pm = (\alpha,\gamma) \right) \nonumber\\
  \label{eq:trbieM1}
  &= \begin{cases}
       \widetilde{T}_x(\alpha, \beta) & \text{if } T_x(\gamma, \delta) > 0,\\
       0 & \text{otherwise}.
     \end{cases}
\end{align}
The proof proceeds analogously to the forward bidirectional machine and is
omitted here.  For future reference, Fig.~\ref{fig:Lattice2} displays
$\alpha$, $\beta$, $\gamma$,
and $\delta$ on the process lattice, as they are used in the definition of
the reverse bidirectional machine. Thus, we have $\CausalState_0^\pm =
(\beta, \delta)$ and $\CausalState_1^\pm = (\alpha, \gamma)$. Note, these
variables are swapped in the definition of the forward
bidirectional machine.

The choice of $\widetilde{T}_x$ as the notation for the reverse bidirectional
machine's transition matrices suggests that it is related to the time-reversal
of the forward bidirectional machine.  Indeed, the definition of the
forward bidirectional machine already provides the matrices for the
right-to-left dynamic.  Thus, we see that $M^\mp = \widetilde{M}^\pm$:
\begin{align}
  \widetilde{T}_x(\alpha\gamma,\beta\delta)
  &\equiv
  \Pr\left( X_0=x, \CausalState_0^\pm = (\beta,\delta) \,|\,
                   \CausalState_1^\pm = (\alpha,\gamma) \right) \nonumber\\
  \label{eq:trbieM2pre}
  &= \frac{\pi(\beta, \delta) T_x(\beta\delta, \alpha\gamma)}{\pi(\alpha,\gamma)} ~.
\end{align}
Applying Eq.~\eqref{eq:bieM} gives the direct relation to the forward \eM:
\begin{align}
  \widetilde{T}_x(\alpha\gamma,\beta\delta)
  &= \begin{cases}
    \frac{\displaystyle\pi(\beta,\delta)\rule[-4pt]{0pt}{0pt}}
         {\displaystyle\pi(\alpha,\gamma)\rule[0pt]{0pt}{9pt}}
         \widetilde{T}_x(\delta, \gamma)
    &\text{if } T_x(\beta, \alpha) > 0, \\
    0 & \text{otherwise};
  \end{cases} \nonumber\\
  \label{eq:trbieM2}
  &= \begin{cases}
    \frac{\displaystyle\pi(\beta \,|\, \delta)\rule[-4pt]{0pt}{0pt}}
         {\displaystyle\pi(\alpha \,|\, \gamma)\rule[0pt]{0pt}{9pt}}
         T_x(\gamma, \delta)
    &\text{if } \widetilde{T}_x(\alpha, \beta) > 0, \\
    0 & \text{otherwise}.
  \end{cases}
\end{align}

Comparing Eqs.~\eqref{eq:trbieM1} and \eqref{eq:trbieM2}, we see that
whenever $T_x(\gamma, \delta)$ and $\widetilde{T}_x(\alpha,\beta)$ are
simultaneously positive, then we have:
\begin{align}
 \label{eq:samepath1}
 \pi(\alpha \,|\, \gamma) \widetilde{T}_x(\alpha, \beta) =
 \pi(\beta \,|\, \delta) T_x(\gamma, \delta) ~.
\end{align}
A complementary relation, obtained by applying Bayes theorem, is:
\begin{align}
 \label{eq:samepath2}
 \pi(\gamma \,|\, \alpha) {T}_x(\beta, \alpha) =
 \pi(\delta \,|\, \beta) \widetilde{T}_x(\delta, \gamma) ~.
\end{align}

The interpretations of Eqs.\eqref{eq:samepath1} and \eqref{eq:samepath2}
are properly framed using the process lattice, as shown in
Fig.~\ref{fig:Lattice2}. We could have
also worked with the forward bidirectional machine, expressing its transition
matrix as the Bayes inverse of $\widetilde{T}_x$ and, then, equating it to
Eq.~\eqref{eq:bieM}.  However, this does not yield any new insight.

Generally, these equations represent path equivalence on the process lattice.
In the left-hand side of Eq.~\eqref{eq:samepath1}, we begin in
$\CausalState_1^-=\gamma$, transition to $\CausalState_0^-=\delta$
on symbol $\MeasSymbol_0=x$, and then shift to $\CausalState_0^+=\beta$.
This path is represented in red in the right diagram of
Fig.~\ref{fig:pathequiv}.  The right-hand side of Eq.~\eqref{eq:samepath1}
says that the red path is equivalent (in probability) to the blue path, which
also begins in $\CausalState_1^-=\gamma$. However, now it shifts to
$\CausalState_1^+ = \alpha$ first, and then reverse transitions to
$\CausalState_0^+=\beta$ on symbol $\MeasSymbol_0=x$.
Equation~\eqref{eq:samepath2} provides an analogous result and is summarized
in the left diagram of Fig.~\ref{fig:pathequiv}.  There, we begin in
$\CausalState_0^+$ and transition to $\CausalState_1^-$ via two equivalent paths.

\begin{figure}
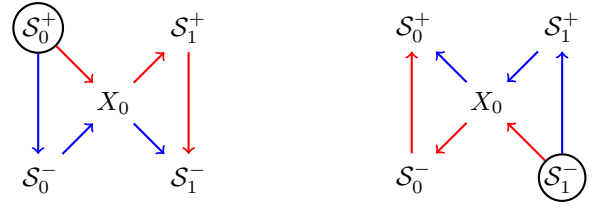

\centering

  \ifgenerate
    \ifcache
      \tikzsetnextfilename{pathequiv1}
    \fi
    \input{\tikzpathpathequiv1.tikz}
  \else
    \includegraphics{\tikzpathexternalpathequiv1}
  \fi

$\qquad\phantom{\leftrightarrow}\qquad$

  \ifgenerate
    \ifcache
      \tikzsetnextfilename{pathequiv2}
    \fi
    \input{\tikzpathpathequiv2.tikz}
  \else
    \includegraphics{\tikzpathexternalpathequiv2}
  \fi

\caption{Equations~\eqref{eq:samepath1} (left) and \eqref{eq:samepath2} (right)
demonstrate path equivalence. In each, the red and blue paths are
equivalent ways of moving around on the process lattice.}
\label{fig:pathequiv}
\end{figure}

The bidirectional machines are so-named because their state space consists
of the forward and reverse causal states and their transition dynamic allows
one to go in either direction. However, the bidirectional machine is still a
one-way generator and this is why there are two variants, $M^\pm$ and $M^\mp$.
These two variants are simply time-reversals of one
another, even if the underlying process is irreversible. Having established
the proper setting for bidirectional machines, in the next section we analyze
several of their properties and the consequences of this symmetry.

\subsection{Bi-Infinite Sequence Partitions}

The bidirectional machine $M^\pm$ can be understood by analyzing its effect
on the past and future.  Previously, we saw that the forward causal states
$\ForwardCausalState$ partitioned the semi-infinite histories $X_{:0}$, while
the reverse causal states $\ReverseCausalState$ partitioned the semi-infinite
futures $X_{0:}$. The bidirectional machine, it turns out, partitions the set
of bi-infinite strings $X_{:} \equiv X_{:0}X_{0:}$.  This is expressed by the
bidirectional equivalence relation~\cite{Crut08b}:
\begin{align*}
  (x_{:0}, x_{0:}) \sim^\pm (x_{:0}^\prime, x_{0:}^\prime) \iff
  \epsilon^+(x_{:0}) &= \epsilon^+(x_{:0}^\prime) \text{ and } \\
  \epsilon^-(x_{0:}) &= \epsilon^-(x_{0:}^\prime) ~.
\end{align*}
Thus, the bidirectional causal states $\BiCausalStateSet$ are a partition of
bi-infinite strings resulting from the application of an equivalence
relation: $\CausalStateSet^\pm = (X_{:0}, X_{0:}) / \sim^\pm$.  The mapping
$\epsilon^\pm(\cdot)$ that takes a bi-infinite string to its bidirectional
causal state is defined:
  \begin{align*}
  &\hspace{-.15in}\epsilon^\pm (x_{:0}, x_{0:}) \\
  &= \{ (x_{:0}^\prime, x_{0:}^\prime) :
    x_{:0}^\prime \in \epsilon^+(x_{:0}) \text{ and }
    x_{0:}^\prime \in \epsilon^-(x_{0:}) \} ~.
\end{align*}
Note, that the same equivalence relation is used for the forward
and reverse bidirectional machines.  All that differs is the dynamic
over the states.

For $M^\pm$ and $M^\mp$, we use a bi-infinite instance and shift
the time origin to the right (for $M^\pm$) or to the left (for $M^\mp$).
The symbol encountered during the shift is the symbol generated.

However, given any bidirectional partition, it does not follow that the dynamic will be
unifilar, and this is precisely the case for the bidirectional machines.  With
\eMs, all histories (or futures) in the equivalence class have exactly the same
distribution over futures (or histories). And so, on the next symbol, every
history (or future) in the causal state transitioned to the same next causal
state.  With the bidirectional machine, this is no longer true, and the
dynamic over the states is generally nonunifilar.

\subsection{Properties}

Each of the process and presentation properties discussed can be considered
operators.  That is, given a model $M$, we calculate a quantity relative
to the model alone, using its \emph{local} sense of time.  This point is
worth remembering as we discuss properties of the bidirectional machines. We
will continue, however, to frame the various quantities using the
bird's eye view of the process lattice.

The stationary distribution for the forward bidirectional machine is
$\Pr(\ForwardCausalState,\ReverseCausalState)$ and, as
Sec.~\ref{sec:HMMReversibility} discussed, the reverse bidirectional machine
has the same stationary distribution. Using Refs.~\cite{Crut08a,Crut08b},
we can immediately calculate the excess entropy as $\EE = I[\CSjoint[;]]$.
Importantly, this quantity is not calculable given \emph{only} the forward and
reverse \eMs. (Alternate methods to calculate $\EE$ end up being essentially
equivalent to invoking the bidirectional machine.)

As mentioned, the bidirectional machine can also be nonunifilar. Since the
bidirectional causal states are the joint distribution over the forward and
reverse causal states, the bidirectional machine's oracular information
$\OI(M^\pm)$ is the crypticity of the reverse \eM\ $\PC^-$. Additionally, the
bidirectional machine's crypticity $\PC(M^\pm)$ is the crypticity $\PC^+$ of
the forward \eM.

If, instead, we work with the reverse bidirectional machine $M^\mp$, all the
interpretations are flipped. Then the crypticity $\PC(M^\mp)$ is equal to the
reverse \eM's crypticity $\PC^-$, and the oracular information $\OI(M^\mp)$
is the forward \eM's crypticity $\PC^+$. Recall that \eMs\ do not have oracular
information, since they are unifilar.

These information quantities are summarized in
Fig.~\ref{fig:BidirectionalIDiagram}. There, we see that the reverse
bidirectional machine swaps crypticity and oracular information just as a
general hidden Markov model~\cite{Crut10a}.

Of the presentation quantifiers, this leaves only the gauge information
$\GI^\mp$ to be explained. Recall that the past $X_{:0}$ completely determines
the future causal state $\ForwardCausalState_0$ and that the future $X_{0:}$
completely determines the past causal state $\ReverseCausalState$. Then, this
gives:
\begin{align*}
    \GI(M^\pm)
  &=
    H[\CausalState^\pm | X_{:0}, X_{0:}]\\
  &=
    H[\CSjoint[,] | X_{:0}, X_{0:}] \\
  &=
    H[\ForwardCausalState | X_{:0}, X_{0:}]
    + H[\ReverseCausalState | X_{:0}, X_{0:}, \ReverseCausalState] \\
  &\leq
    H[\ForwardCausalState | X_{:0}] + H[\ReverseCausalState | X_{0:}] \\
  &= 0 + 0 ~.
\label{eq:BiMachineNoGauge}
\end{align*}
Thus, the bidirectional machine does have a certain representational efficiency:
It has no gauge information. This is implicitly shown in
Fig.~\ref{fig:BidirectionalIDiagram}, but more easily seen in
Fig.~\ref{fig:eMidiagram}.  There, we see that the ellipse representing the
bidirectional machine's states (the union of $\Cmu^+$ and $\Cmu^-$) only
consists of areas within the entropies of the past $H[X_{:0}]$ and
future $H[X_{0:}]$. Naturally, one wonders if it is possible to define
the bidirectional machine through constraints.  To this end, we conjecture that the
bidirectional machine is the only generator of the process with zero gauge
information that marginalizes into the forward and reverse \eMs\ and, additionally,
has $\PC(M^\pm) = \PC(M^+)$ and $\OI(M^\pm) = \PC(M^-)$.

We now turn to the various state entropy quantities that play a role in the
bidirectional machine. The state entropy of the forward and reverse \eMs\
represented the forward and reverse statistical complexities:
$\ForwardCmu \equiv H[\ForwardCausalState]$ and
$\ReverseCmu \equiv H[\ReverseCausalState]$.  Similarly, we denote the
state entropy of the forward and reverse bidirectional machine by
$\Cmu^\pm = H[\CausalState^\pm] = H[\CSjoint[,]]$ and call it the
\vocab{bidirectional statistical complexity}.
It represents the total amount of information needed to
predict or retrodict optimally. The key difference
between $\Cmu^\pm$ and the directed statistical complexities is that with the
bidirectional machine, one has a choice in which action, prediction or
retrodiction, is taken~\footnote{One
must choose only one action: prediction or retrodiction.  The forward
bidirectional machine allows one to make a prediction, while the reverse
bidirectional machine allows one to make a retrodiction.  Making a simultaneous
prediction and retrodiction with each machine does not yield the correct
joint probabilities over predicted and retrodicted symbols}.   We further note
that both $\Cmu^+$ and $\Cmu^-$ play equivalent roles in $\Cmu^\pm$, to the
extent that $\EE$ is contained in both. Due to this, we can see that:
\begin{align}
  \Cmu^\pm = \Cmu^+ + \Cmu^- - \EE ~.
\end{align}

One can also marginalize the bidirectional machine's transition matrices to
recover the forward and reverse \eMs. For $\alpha,\beta
\in \CausalStateSet^+$ and $\delta, \gamma \in \CausalStateSet^-$, we
marginalize $M^\pm$ to get $M^+$ as follows:
\begin{align*}
  T_x(\alpha,\beta)
    &= \Pr( X_0 = x, \ForwardCausalState_1 = \beta
         \,|\, \ForwardCausalState_0 = \alpha) \\
    &= \sum_{\gamma,\delta}
	\pi(\gamma \,|\, \alpha) \, T_x(\alpha\gamma, \beta\delta),
\end{align*}
where $\pi(\gamma \,|\, \alpha) \equiv \pi(\alpha\gamma) / \pi(\alpha)$
and $T_x(\alpha\gamma, \beta\delta)$ is given by Eq.~\eqref{eq:bieM}.
Similarly, we marginalize $M^\mp$ to get $M^-$:
\begin{align*}
  T_x(\gamma,\delta)
    &= \Pr( X_0 = x, \ReverseCausalState_0 = \delta
            \,|\, \ReverseCausalState_1 = \gamma)\\
    &= \sum_{\alpha,\beta}
	\pi(\alpha \,|\, \gamma) \, \widetilde{T}_x(\alpha\gamma, \beta\delta),
\end{align*}
where $\pi(\alpha \,|\, \gamma) \equiv \pi(\alpha\gamma) / \pi(\gamma)$
and $\widetilde{T}_x(\alpha\gamma, \beta\delta)$ is given by
Eq.~\eqref{eq:trbieM1}.

It also happens that knowing the bidirectional causal state is not always
helpful. Specifically, we have:
\begin{align*}
H[X_0|\ForwardCausalState_0,\ReverseCausalState_0]
  &= H[X_0 |
\ReverseCausalState_0] ~\mathrm{and}\\
H[X_{-1}|\ForwardCausalState_0,\ReverseCausalState_0]
  &= H[X_{-1} | \ForwardCausalState_0] ~.
\end{align*}
In other words, a question about the future is best understood by something
which comes from the future (and vice versa for questions about the past).
The reason for each of these results can be immediately deduced from
Fig.~\ref{fig:eMidiagram}.

\subsection{Uses}

The bidirectional machine is also useful in a number of ways. We
briefly mention several.

First, we note that $M^\pm$ and $M^\mp$, together, could be interpreted as a
transducer. Given a desired direction of time, one can move forward or backward
along the process lattice.  While the transducer viewpoint holds for any hidden
Markov model, only the bidirectional machine allows one to predict or
retrodict. To wit, if one constructed a transducer using $M^+$ and
$\widetilde{M}^+$, then one could make predictions, but it would not be
possible to retrodict since the forward causal states are not sufficient
statistics for the future---they are not suited for retrodiction.  This is
precisely the advantage of the bidirectional machine, since it tracks both
the forward and reverse causal states.

Second, the bidirectional machine allows one to exactly calculate the
persistent mutual information $\mathcal{I}_1$ \cite{Ball2010} over
a single-step time interval. Previously available only through empirical
estimates, $\mathcal{I}_1$ is the amount of information
$I[\MeasSymbol_{:0};\MeasSymbol_{1:}|\MeasSymbol_0]$
shared between $\MeasSymbol_{:0}$ and $\MeasSymbol_{1:}$, ignoring
$\MeasSymbol_0$. Note that neither \eM\ can give us the appropriate
distribution over $\MeasSymbol_{:0}$ and $\MeasSymbol_{1:}$, but the
bidirectional machine can. And so, it allows one to calculate $\mathcal{I}_1$
exactly. Since $\MeasSymbol_{:0}$ determines $\CausalState_0^+$ and
$\MeasSymbol_{1:}$ determines $\CausalState_1^-$, we can write the shared
information as $\mathcal{I}_1 = I[\CausalState_0^+ ; \CausalState_1^-]$.
The bidirectional machine provides access to the joint distribution
$\Pr(\ForwardCausalState_0, \ReverseCausalState_0, \MeasSymbol_0,
     \ForwardCausalState_1, \ReverseCausalState_1)$ and from this, we can
calculate $\mathcal{I}_1$ in closed-form.

Finally, Refs.~\cite{Abdallah2010} and \cite{Abdallah2010a} investigated the
binding information
$b_\mu = I[\MeasSymbol_0 ; \MeasSymbol_{:1} | \MeasSymbol_{:0}]$ and
the residual entropy
$r_\mu = H[\MeasSymbol_0 | \MeasSymbol_{:0}, \MeasSymbol_{1:}]$. There, they
had to be computed essentially by brute force. Fortunately, the bidirectional
machine again allows us to compute these exactly and in a manner similar to
that for $\mathcal{I}_1$. We again replace $\MeasSymbol_{:0}$ by
$\CausalState_0^+$ and $\MeasSymbol_{1:}$ by $\CausalState_1^-$, giving $b_\mu =
I[\MeasSymbol_0 ; \CausalState_1^- | \CausalState_0^+]$ and $r_\mu =
H[\MeasSymbol_0 | \CausalState_0^+ , \CausalState_1^-]$. Here also, the
bidirectional machine's transitions provide the joint distribution
$\Prob(\CausalState_0^+, \CausalState_0^-, \MeasSymbol_0, \CausalState_1^+,
\CausalState_1^-)$, which can be manipulated appropriately to compute both
$b_\mu$ and $r_\mu$.

In summary, we see that the bidirectional machine gives ready access to
closed-form calculations for a wide range of measures in complex processes.

\subsection{Example}

We close by returning to the irreversible example of
Fig.~\ref{fig:IrreversibleExample}.  Its forward \eM\ has two causal states
while its reverse \eM\ has three causal states. When the partitions for each
\eM\ are logically \textbf{AND}ed together, we obtain the bidirectional machine's
partition over bi-infinite strings.

A compelling visualization of the bidirectional machine's partition is to
superpose the partitions that appeared in Fig.~\ref{fig:IrreversibleExample}.
For example, in the forward \eM, the square corresponding to $X_{-1}X_0 = 21$
was associated with state $A$ (turquoise).  In the reverse \eM, the same
square was associated with state $D$ (red-orange).  Together, the same square
appears in Fig.~\ref{fig:IrreversibleExampleBiMachine}, as both $A$ and $D$.

\begin{figure*}
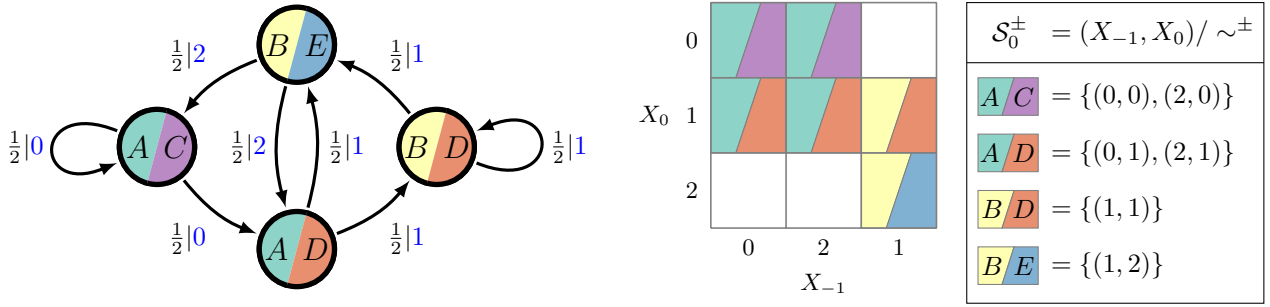

  \begin{minipage}{\columnwidth}
  \centering
  
  \ifgenerate
    \ifcache
      \tikzsetnextfilename{irrev2to3_bieM}
    \fi
    \input{\tikzpathirrev2to3_bieM.tikz}
  \else
    \includegraphics{\tikzpathexternalirrev2to3_bieM}
  \fi

  \end{minipage}%
  \begin{minipage}{\columnwidth}
  
  \ifgenerate
    \ifcache
      \tikzsetnextfilename{irrev2to3_bieMpartition}
    \fi
    \input{\tikzpathirrev2to3_bieMpartition.tikz}
  \else
    \includegraphics{\tikzpathexternalirrev2to3_bieMpartition}
  \fi

  \end{minipage}
\caption{The bidirectional machine $M^\pm$ (left) has causal states
  $\BiCausalState = (\ForwardCausalState, \ReverseCausalState)$ that partition
  bi-infinite sequences $\MeasSymbol_{:}$ of the causally irreversible process
  of Fig.~\ref{fig:IrreversibleExample} (right). In this case, it is sufficient
  to partition sequences using only $(X_{-1}, X_0)$. However, when used as a
  forward (or reverse) generator, the states of the resulting hidden Markov
  model $M^\pm$ do not correspond to a partition of the pasts (or futures)
  since the machine is nonunifilar, as is directly checked in the
  state-transition diagram.
  }
\label{fig:IrreversibleExampleBiMachine}
\end{figure*}

Continuing superposition, we see that there are four bidirectional states and
that these four states partition all bi-infinite sequences.
In particular, bidirectional state $AD$ includes any sequences ending with a $0$
or $2$ and beginning with a $1$. So, if one learns $\Cmu^\pm$ bits, then one
has the luxury, in this case, of knowing that the next symbol \emph{must} be
a $1$. There is inherent uncertainty in the retrodicting the previous symbol.
This is easily verified in the bidirectional machine $M^\pm$ (left) of
Fig.~\ref{fig:IrreversibleExampleBiMachine}.

Finally, Fig.~\ref{fig:BidirectionalIDiagram} presents the bidirectional
machine's information diagram sans the past $H[X_{:0}]$ and future $H[X_{0:}]$.
The three circles, now drawn to scale, represent the statistical complexities
for the forward and reverse \eMs\ and, also, for the bidirectional machine.
Note that the bidirectional machine's state is simply the combination of the
forward and reverse causal states. Calculations give $\ForwardCmu = 1$ bit,
$\ReverseCmu = 3/2$ bit, $\EE = 1/2$ bit, $\Cmu^\pm = 2$ bits.  This yields
$\PC^+ = \PC(M^\pm) = \OI(M^\mp) = 1/2$ bit, $\PC^- = \PC(M^\mp) =
\OI(M^\pm) = 1$ bit and, finally, $\GI = 0$,
verifying Eq.~(\ref{eq:BiMachineNoGauge}).  The bidirectional machine
also gives $\mathcal{I}_1 = 0$, $b_\mu = 1/2$, and $r_\mu = 1/2$.  Then,
according to Ref. \cite{Jame11a}, the entropy rate is
$\hmu = b_\mu + r_\mu = 1$.

\begin{figure}
\centering
\includegraphics{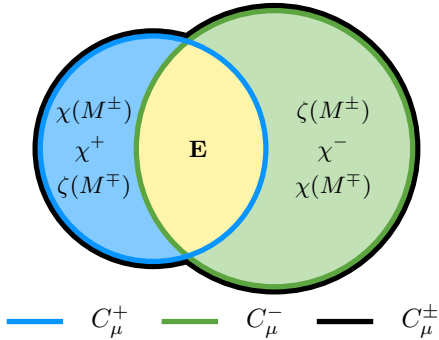}
\caption{
  A quantitatively scaled information diagram for the bidirectional machine of
  Fig.~\ref{fig:IrreversibleExampleBiMachine}. The bidirectional states combine
  the forward and reverse causal states and are represented by the black,
  encompassing line.  Since the forward (blue) and reverse (green) statistical
  complexities lay completely within the past and future respectively, the
  bidirectional machine has no gauge information: $\GI(M^\pm) = 0$.}
\label{fig:BidirectionalIDiagram}
\end{figure}

\section{Conclusion}

The preceding developed a rather thorough survey of reversibility,
irreversibility, and time asymmetry---these being understood in the sense of
analyzing a process's statistical and structural properties scanned either in
forward or in reverse directions with respect to the direction in which it was
given or generated. One result was a stark distinction between Markov chains
and hidden Markov models. For one, we explored the ability of hidden Markov
models to finitely represent infinite-state Markov chains. This came at a high
cost, as we noted: The problem of representational degeneracy appears. We
removed this, however, and so were able to present a number of constructive
results by using the \eM\ as a canonical presentation. Considering that our
field of interest is stationary processes, what we found was surprising. First,
irreversibility is a dominant property in process space. Second,
processes that are finite in one direction can explode into infinite-state
processes in the other. And, third, there is a suite of
information-theoretic measures, helpfully and constructively
captured in various information diagrams, that quantitatively distinguish
structural properties of presentations.

The net result is a new appreciation of irreversibility and a new
toolkit for analyzing irreversible processes. There are many
interesting implications of the long list of technical results. To
suggest what these might be and how they will be applied in the
near future, we would like to close by returning to the physical
motivations called out at the beginning. Specifically, we will
comment on the physical meaning of ``hidden'' processes, the
relationship between the diverse irreversibility properties of
processes and possible physical instantiations, and, finally,
irreversibility in thermodynamic processes.

Why hidden processes? During an interaction between any two systems, only a
portion of each system's internal configuration (or state) is presented to
or is available from the other. On the flip side, not every system
can take on the full state information of another. In effect, each system
views the other as a hidden process. Moreover, in this view measurement is only a
special case of interaction. The measurement act typically does not provide
all of the observed system's state. Thus, for measured processes or collections
of interacting systems one should view them and analyze them as inherently
hidden processes.

Although the analysis largely stayed at the level of probability, statistics,
and information, any implementation resides in a physical substrate. This
simple observation leads one to immediately ask, How are the statistical and
structural properties and classifications of irreversibility related to the
organization of a physical substrate? The direct technical answer is that each
atom in the process's information-measure sigma algebra is associated with
particular degrees of freedom, structures, and behaviors in a physical
implementation. The connection can be made constructively: One of the
longest-standing methods to map between continuous-state physical systems
and sequences is given by symbolic dynamics \cite{Lind95a}.

In light of the preceding structural classifications, one now
sees that the range of alternate presentations for a process parallels
and constrains the range of its possible physical
implementations. In this, each different presentation comes with its own
distinct set of properties---redundancy, crypticity, oracular information,
and the like. In short, then, to study a process's presentations, to classify them,
and to metrize their properties is to study fundamental properties of the
associated physical implementations.

Of course, more is required to complete the mapping from a presentation's
intrinsic computation to the required physics. For example, what is the
entailed dissipation? This reminds one, naturally, of Landauer's Principle:
A computation's logical irreversibility is a lower bound on the required
amount of energy dissipation in the physical implementation \cite{Land89a}.
To the extent that dynamical irreversibility and crypticity control logical
irreversibility, then they also put a lower bound on the physical
implementation's rate of energy dissipation. More generally, the development
above gives a qualitative lower bound on the richness available and a wide
range of applications.

As noted in the Introduction, irreversibility is commonly interpreted as a transient
relaxation process. For example, isolated thermodynamic systems move to
equilibrium since, according to Boltzmann, there are overwhelmingly more
microstates associated with the equilibrium macrostate. This is concisely
monitored via the increase in thermodynamic entropy during relaxation from an
ordered state. It is enshrined in the Second Law of Thermodynamics. However,
as we showed, relaxation is not the only kind of irreversibility that a
thermodynamic system can exhibit. There are also irreversibilities, as we
analyzed in detail, \emph{within} nonequilibrium steady states or, equivalently,
within general stationary stochastic processes. There, a
thermodynamic system is still a process, behaving in time. It is the structure
of this temporal behavior that leads to \emph{dynamical irreversibility} within
the set of configuration trajectories---the temporally invariant set consistent
with being in a nonequilibrium steady state. The preceding gave a new view of
just what these structures are, what irreversibility means in hidden processes,
and a general classification scheme for dynamically reversible and
irreversible processes.

Concretely, recent explorations of thermodynamic irreversibility and energy
dissipation \cite{Jarz97a,Croo98a,Toya10a} ignore distinctions that are
critical for properly identifying statistical irreversibility and intrinsic
computation, as laid out here. Thus, the preceding developments provide a
detailed analysis that will help these efforts by rectifying and grounding
these notions, particularly in terms of the possible physical instantiations
of dynamical irreversibility.

Our analysis of how the past and future are contained in the present is
addressed to a complex world in which structure and randomness co-exist:
\begin{quote}
Time present and time past\\
Are both perhaps present in time future,\\
And time future contained in time past. \\
\\
T. S. Eliot, Buirnt Norton, No. 1 of \emphasis{Four Quartets}.
\end{quote}
In considering general stochastic processes, though, the analysis moves substantially beyond the deterministic world of
Laplace's omniscient Daemon, where initial data is exactly preserved for all
times, past and future. Eliot aptly summarizes our exploration of irreversible
processes, their pasts and futures, and the role the bidirectional machine
plays in capturing the structured present.

\section*{Acknowledgments}

We thank Jason Barnett and Nick Travers for helpful discussions and comments.
JR was supported by a Fellowship Computational Sciences of the Volkswagen
Foundation. This work was partially supported by NSF Grant No. PHY-0748828 and
supported by the Defense Advanced Research Projects Agency (DARPA) Physical
Intelligence Subcontract No. 9060-000709. The views, opinions, and findings
contained in this article are those of the authors and should not be
interpreted as representing the official views or policies, either expressed or
implied, of the DARPA or the Department of Defense.


\appendix

\section{Markov Order is Time Symmetric}
\label{app:markov}

The principal goal here is to review the properties of Markov
processes so that we can establish the time-symmetry of the Markov
order.
  
\begin{Def}
A process $\Process$ is \emph{\ORDER\MOrder{} Markov} if and only if:
\begin{align}
  \Pr(X_0|X_{:0}) = \Pr(X_0|X_{-R:0}) ~.
\end{align}
\end{Def}

If $\Process$ is \ORDER\MOrder{} Markov, then it is also
\ORDER$\MOrder{}^\prime$ Markov for $R^\prime \geq R$.  However, it is common
to refer to the smallest such $R$ as the \emph{Markov order}.

\begin{Lem}
If a process $\Process$ is \ORDER\MOrder{} Markov, then the future depends
only on the last $R$ symbols; that is,
\begin{align}
  \Pr(X_{0:L}|X_{:0}) = \Pr(X_{0:L}|X_{-R:0}) ~.
\end{align}
\end{Lem}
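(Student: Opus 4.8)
The plan is to prove the block statement by induction on the block length $L$, peeling off one symbol at a time with the chain rule and invoking the single-symbol Markov property (shifted by stationarity) on each newly exposed factor. First I would dispose of the base case $L=1$, which is nothing but the definition of \ORDER\MOrder{} Markovity evaluated at index $0$: $\Pr(X_{0:1}|X_{:0}) = \Pr(X_0|X_{:0}) = \Pr(X_0|X_{-R:0}) = \Pr(X_{0:1}|X_{-R:0})$, requiring no shift and no extra hypothesis.

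For the inductive step I would assume the claim at length $L$ and factor the left-hand side with the chain rule as $\Pr(X_{0:L+1}|X_{:0}) = \Pr(X_{0:L}|X_{:0})\,\Pr(X_L|X_{:0},X_{0:L})$. The first factor is rewritten via the inductive hypothesis as $\Pr(X_{0:L}|X_{-R:0})$. For the second factor I note that $(X_{:0},X_{0:L}) = X_{:L}$ is precisely the entire past through index $L$, so $\Pr(X_L|X_{:0},X_{0:L}) = \Pr(X_L|X_{:L})$; stationarity then lets me apply the single-symbol Markov property at index $L$ to get $\Pr(X_L|X_{:L}) = \Pr(X_L|X_{L-R:L})$. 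Since $L\ge 0$ forces $-R \le L-R$, the window $X_{L-R:L}$ is a suffix of $X_{-R:L} = (X_{-R:0},X_{0:L})$, so this equals $\Pr(X_L|X_{-R:0},X_{0:L})$. Recombining the two factors by the chain rule yields $\Pr(X_{0:L}|X_{-R:0})\,\Pr(X_L|X_{-R:0},X_{0:L}) = \Pr(X_{0:L+1}|X_{-R:0})$, closing the induction.

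The one nonroutine point — the hard part — is the reduction $\Pr(X_L|X_{-R:L}) = \Pr(X_L|X_{L-R:L})$, namely that conditioning on the intermediate finite window already collapses to conditioning on just the last $R$ symbols. I would justify this from the Markov hypothesis itself: the shifted statement $\Pr(X_L|X_{:L}) = \Pr(X_L|X_{L-R:L})$ asserts that the full-past conditional is $X_{L-R:L}$-measurable. Taking conditional expectations of both sides given the intermediate window $X_{-R:L}$, and using that $X_{L-R:L}$ is measurable with respect to that window so that the right-hand side passes through the averaging unchanged, gives $\Pr(X_L|X_{-R:L}) = \Pr(X_L|X_{L-R:L})$. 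The only real care needed throughout is bookkeeping: keeping the index shift consistent with stationarity, and checking at each stage that every conditioning set in play is sandwiched between the last $R$ symbols and the full past, so that the Markov reduction applies verbatim.
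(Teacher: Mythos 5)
Your proof is correct and follows essentially the same route as the paper: the paper applies the chain rule to write $\Pr(X_{0:L}|X_{:0})$ as a product of one-symbol conditionals and reduces each factor via the Markov property, which is exactly what your induction does one factor at a time. The only difference is that you explicitly justify the step $\Pr(X_t|X_{:t}) = \Pr(X_t|X_{-R:0},X_{0:t})$ via the tower property over the intermediate window, a detail the paper's proof asserts without comment.
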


\begin{proof}
By a simple application of the chain rule, we have:
\begin{align*}
   \Pr(X_{0:L} | X_{:0}) &= \prod_{t=0}^L \Pr(X_t|X_{:t}) \\
                         &= \prod_{t=0}^L \Pr(X_t|X_{-R:0},X_{0:t}) \\
                         &= \Pr(X_{0:L} | X_{-R:0}) ~. \qedhere
\end{align*}
\end{proof}

The result generalizes. The probability of any combination of random
variables in the future given the entire past is the same as when given
only the last $R$ symbols.

Note that the Markov definition is not time symmetric. This invites another
notion of Markovity.

\begin{Def}
A process $\Process$ is \emph{\ORDER\MOrder{} reverse-Markov} if and only if:
\begin{align}
   \Pr(X_{-1}|X_{0:}) = \Pr(X_{-1}|X_{0:R}) ~.
\end{align}
\end{Def}

\begin{Lem}
If a process $\Process$ is \ORDER\MOrder{} reverse-Markov, then the past
depends only on the first $R$ symbols:
\begin{align}
   \Pr(X_{-L:0}|X_{0:}) = \Pr(X_{-L:0}|X_{0:R}) ~.
\end{align}
\end{Lem}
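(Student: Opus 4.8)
The plan is to mirror the proof of the preceding (forward) lemma, exploiting the formal time-symmetry between the Markov and reverse-Markov definitions; the reverse statement is obtained from the forward one by running the chain rule ``inward'' from the origin rather than outward. First I would factor the joint probability of the past block by the chain rule, revealing the symbols $X_{-1}, X_{-2}, \ldots, X_{-L}$ in order of increasing distance from the origin (which is ``forward'' in the reverse generator's local time):
\begin{align*}
  \Pr(X_{-L:0}\,|\,X_{0:})
    = \prod_{t=1}^{L} \Pr(X_{-t} \,|\, X_{-t+1:0}, X_{0:})
    = \prod_{t=1}^{L} \Pr(X_{-t} \,|\, X_{-t+1:}) ~,
\end{align*}
where for $t=1$ the block $X_{-t+1:0}$ is empty and the factor is $\Pr(X_{-1}\,|\,X_{0:})$.

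Next I would apply the reverse-Markov property to each factor. The definition gives $\Pr(X_{-1}\,|\,X_{0:}) = \Pr(X_{-1}\,|\,X_{0:R})$; shifting all indices by $1-t$ and invoking stationarity yields $\Pr(X_{-t}\,|\,X_{-t+1:}) = \Pr(X_{-t}\,|\,X_{-t+1:-t+1+R})$, so each factor depends only on the first $R$ symbols $X_{-t+1}, \ldots, X_{-t+R}$ of its conditioning future. These split into a negative-index portion $X_{-t+1}, \ldots, X_{-1}$, already present as $X_{-t+1:0}$, and a nonnegative-index portion $X_0, \ldots, X_{R-t}$, which is contained in $X_{0:R}$. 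Hence each factor equals $\Pr(X_{-t}\,|\,X_{-t+1:0}, X_{0:R})$ (the enlargement of the conditioning set being harmless, exactly as in the forward proof, because of the reverse-Markov conditional independence). Reassembling the product by the chain rule, now with the common block $X_{0:R}$ in every conditional, collapses it to $\Pr(X_{-L:0}\,|\,X_{0:R})$, which is the claim.

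I expect the main obstacle to be purely the index bookkeeping, compounded by the shifting notions of ``forward'' and ``reverse'' time that the paper warns about repeatedly. The one genuinely load-bearing checkpoint is that the retained nonnegative indices never exceed $R-1$, i.e. $R-t \le R-1$; this holds for all $t \ge 1$ and is precisely what guarantees that the \emph{first} $R$ symbols $X_{0:R}$ (rather than some longer future prefix depending on $t$) suffice uniformly across all factors. Getting the stationarity shift and the empty-block boundary case ($t=1$) right is the only place care is needed; the rest is routine.
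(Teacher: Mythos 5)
Your proof is correct, and it is exactly the argument the paper intends: the paper states this lemma without proof, leaving it as the mirror image of the preceding forward lemma, whose proof is the same chain-rule factorization $\Pr(X_{0:L}|X_{:0}) = \prod_t \Pr(X_t|X_{:t}) = \prod_t \Pr(X_t|X_{-R:0},X_{0:t})$ run in the opposite temporal direction. Your index bookkeeping (the stationarity shift, the empty block at $t=1$, and the check that the retained nonnegative indices stay below $R$) is all sound, so there is nothing to add.
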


It happens that the Markov order and reverse Markov orders are always equal.

\begin{The}
\label{thm:MarkovReversible}
A process $\Process$ is \ORDER\MOrder{} Markov if and only if it is
\ORDER\MOrder{} reverse-Markov.
\end{The}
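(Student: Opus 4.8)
The plan is to reduce both implications to the two generalized lemmas already established—that an order-$R$ Markov process has its entire future determined, after conditioning, by the last $R$ symbols, and the mirror statement for reverse-Markov processes—and then to exploit the symmetry of conditional independence. Since order-$R$ reverse-Markovness is just the Markov property read in the opposite time direction, it suffices to carry out one implication in detail and obtain the other by the mirror argument; I will spell out the forward-implies-reverse direction. (An equivalent framing, which I would mention as a remark, recodes the process into the overlapping length-$R$ blocks $Y_t = X_{t:t+R}$, under which order-$R$ Markovness becomes ordinary first-order Markovness of $(Y_t)$, and the theorem collapses to the familiar fact that a stationary first-order Markov chain is Markov in reverse.)

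First I would invoke the generalized forward lemma, $\Pr(X_{0:L}|X_{:0}) = \Pr(X_{0:L}|X_{-R:0})$, and shift its indices by $+R$ using stationarity to get $\Pr(X_{R:R+L}|X_{:R}) = \Pr(X_{R:R+L}|X_{0:R})$ for every finite $L$. Reading the index set of $X_{:R}$ as the pre-window past $X_{:0}$ together with the window $X_{0:R}$, this equality says precisely that $X_{R:R+L}$ is conditionally independent of $X_{:0}$ given $X_{0:R}$. By the decomposition property of conditional independence it follows in particular that $X_{R:R+L} \perp X_{-1} \mid X_{0:R}$, since $X_{-1}$ is one coordinate of $X_{:0}$.

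Next I would let $L \to \infty$. Because the statement holds for every finite $L$ and the cylinder events on $X_{R:R+L}$ generate the $\sigma$-field of the full tail $X_{R:}$, I obtain $X_{-1} \perp X_{R:} \mid X_{0:R}$. Using the symmetry of conditional independence together with the identity $X_{0:} = (X_{0:R}, X_{R:})$, this reads $\Pr(X_{-1}|X_{0:}) = \Pr(X_{-1}|X_{0:R})$, which is exactly order-$R$ reverse-Markovness. The converse runs identically off the reverse-Markov generalized lemma, $\Pr(X_{-L:0}|X_{0:}) = \Pr(X_{-L:0}|X_{0:R})$: shifting by $-R$ yields $X_{:-R} \perp X_{0:} \mid X_{-R:0}$ in the limit, whence $X_0 \perp X_{:-R} \mid X_{-R:0}$, i.e. $\Pr(X_0|X_{:0}) = \Pr(X_0|X_{-R:0})$, recovering order-$R$ Markovness.

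The one genuinely delicate step is the passage $L \to \infty$: I must justify that agreement of the conditional laws of every finite future block $X_{R:R+L}$ upgrades to agreement for the full future $X_{R:}$. This is the standard fact that two probability measures agreeing on a generating $\pi$-system (the future cylinder sets) coincide, phrased here as ``$X_{-1} \perp X_{R:R+L} \mid X_{0:R}$ for all $L$ implies $X_{-1} \perp X_{R:} \mid X_{0:R}$.'' Everything else is bookkeeping: the index shifts are licensed by stationarity, and the remaining moves are the elementary decomposition and symmetry properties of conditional independence.
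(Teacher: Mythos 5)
Your proof is correct, and it takes a genuinely different route from the paper's. The paper proves both directions by brute-force factoring: it writes $\Pr(X_{-1:L})$ and $\Pr(X_{0:L})$ as telescoping products of one-step conditionals (forward-factored for one implication, reverse-factored for the other), cancels the common factors in the ratio $\Pr(X_{-1:L})/\Pr(X_{0:L})$, and reads off $\Pr(X_{-1}|X_{0:L}) = \Pr(X_{-1}|X_{0:R})$ directly, remarking at the end that the identity survives the $L\to\infty$ limit. You instead reduce the theorem to the two generalized lemmas already on record, recast them as conditional-independence statements ($X_{R:R+L} \perp X_{:0} \mid X_{0:R}$ after a stationarity shift), and finish with the decomposition and symmetry axioms of conditional independence plus a $\pi$-system argument for the passage to the infinite tail. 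Your version buys modularity and makes explicit the one measure-theoretic point the paper glosses over (why agreement on all finite future blocks upgrades to agreement given the full future); the paper's version buys self-containedness, since it never needs Lemma 2 as an input for the converse, whereas you do invoke the reverse-Markov generalized lemma, which the paper states without proof. Your remark about recoding into overlapping blocks $Y_t = X_{t:t+R}$ and appealing to reversibility of first-order stationary chains is a third, even shorter path that the paper does not take (it deliberately avoids the order-$R$-to-order-$1$ state-space enlargement elsewhere in the text); if you used that route as the main proof you would want to check that order-$R$ reverse-Markovness of $X$ really is equivalent to first-order reverse-Markovness of $Y$, which requires a little care at the block boundaries. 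As written, with the conditional-independence argument as the main line, the proof is complete.
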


\begin{proof}
We assume $\Process$ is \ORDER\MOrder{} Markov, and then show that
$\Process$ is \ORDER\MOrder{} reverse-Markov as well. Recall that any joint
distribution can be forward factored as:
\begin{align*}
   \Pr(X_{a:b}) &= \prod_{t=a}^{b-1} \Pr(X_t|X_{a:t}) ~. 
\end{align*}
If the process is Markovian and $(b-a)>R$, then this factoring simplifies to:
\begin{align*}
   \Pr(X_{a:b}) &= \prod_{\mathclap{t=a}}^{\mathclap{a+R}} \Pr(X_t | X_{a:t})
                   \prod_{\mathclap{u=a+R+1}}^{b-1} \Pr(X_u | X_{u-R:u}) ~.
\end{align*}
Next, we have:
\begin{align*}
   \Pr(X_{-1:L}) &= \prod_{\mathclap{t=-1}}^{\mathclap{R-1}}
                    \Pr(X_t | X_{-1:t})
                    \prod_{\mathclap{u=R}}^{\mathclap{L-1}}
                    \Pr(X_u | X_{u-R:u}) \\
                 &= \Pr(X_{-1:R})
                    \prod_{\mathclap{u=R}}^{\mathclap{L-1}}
                    \Pr(X_u | X_{u-R:u}) \\
                 &= \Pr(X_{-1} | X_{0:R}) \Pr(X_{0:R})
                    \prod_{\mathclap{u=R}}^{\mathclap{L-1}} \Pr(X_u | X_{u-R:u})
\intertext{and}
   \Pr(X_{0:L}) &= \prod_{\mathclap{t=0}}^{\mathclap{R}} \Pr(X_t | X_{0:t})
                   \prod_{\mathclap{u=R+1}}^{\mathclap{L-1}}
                   \Pr(X_u | X_{u-R:u}) \\
                &= \Pr(X_{0:R+1})
                   \prod_{\mathclap{u=R+1}}^{\mathclap{L-1}}
                   \Pr(X_u | X_{u-R:u}) \\
                &= \Pr(X_{0:R}) \prod_{\mathclap{u=R}}^{\mathclap{L-1}}
                   \Pr(X_u | X_{u-R:u}) ~.
\end{align*}
So, finally, we obtain the desired result:
\begin{align*}
   \Pr(X_{-1}|X_{0:L}) & = \frac{ \Pr( X_{-1:L} ) }{ \Pr( X_{0:L} ) } \\
      & = \Pr(X_{-1} | X_{0:R})  ~.
\end{align*}
In the other direction, we use the reverse factoring of a joint distribution:
\begin{align*}
   \Pr(X_{a:b}) &= \prod_{\mathclap{t=a}}^{\mathclap{b-1}}
                   \Pr(X_t | X_{t+1:b}) ~.
\end{align*}
Then, we assume the process is reverse-Markov to obtain:
\begin{align*}
   \Pr(X_{a:b}) &= \prod_{\mathclap{t=a}}^{\mathclap{b-R-2}}
                   \Pr(X_t | X_{t+1:t+1+R})
                   \prod_{\mathclap{u=b-R-1}}^{\mathclap{b-1}}
                   \Pr(X_t | X_{t+1:b}) ~.
\end{align*}
Similarly, we have:
\begin{align*}
   \Pr(X_{-L:1}) & = \Pr(X_0|X_{-R:0})\Pr(X_{-R:0}) \\
   & \qquad \times \prod_{\mathclap{t=-L}}^{\mathclap{-(R+1)}}
   \Pr(X_t | X_{t+1:t+1+R})
\end{align*}
and
\begin{align*}
   \Pr(X_{-L:0}) &= \Pr(X_{-R:0})
                    \prod_{\mathclap{t=-L}}^{\mathclap{-(R+1)}}
                    \Pr(X_t | X_{t+1:t+1+R})  ~.
\end{align*}
Then,
\begin{align*}
   \Pr(X_{0}|X_{-L:0})
      & = \frac{ \Pr( X_{-L:1} ) }{ \Pr( X_{-L:0} ) } \\
      & = \Pr(X_{0} | X_{-R:0}) ~.
\end{align*}
The results hold for every $L>R$ and in the $L \to \infty$ limit, too.
\end{proof}

The two notions of Markovity relate to forward and reverse generators.

\begin{Lem}
\label{lem:MarkovRevMarkov}
The forward generator $M^+$ is \ORDER\MOrder{} Markov if and only if
the reverse generator is \ORDER\MOrder{} reverse-Markov.
\end{Lem}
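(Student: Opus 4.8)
The plan is to reduce the statement to the process-level symmetry already established in Theorem~\ref{thm:MarkovReversible}, by translating each generator's intrinsic Markov property into a property of the underlying process $\Process$. Concretely, I would show that ``$\ForwardEM$ is \ORDER\MOrder{} Markov'' is the same assertion as ``$\Process$ is \ORDER\MOrder{} Markov,'' that ``the reverse generator is \ORDER\MOrder{} reverse-Markov'' is the same assertion as ``$\Process$ is \ORDER\MOrder{} reverse-Markov,'' and then invoke the theorem to equate the two.

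First I would handle the forward generator, which requires no index manipulation. Since $\ForwardEM$ emits symbols in order of increasing index, its local time agrees with the process index, so the generator being \ORDER\MOrder{} Markov means exactly that the process it produces obeys $\Pr(X_0|X_{:0}) = \Pr(X_0|X_{-R:0})$; that is, $\Process$ is \ORDER\MOrder{} Markov. This identification is essentially definitional.

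Next I would translate the reverse generator's intrinsic Markov property, where the real work lies. The reverse generator emits symbols in order of \emph{decreasing} index: the symbol it produces after having emitted $X_0, X_1, \ldots$ is $X_{-1}$, and its ``most recently emitted $R$ symbols'' are $X_0, \ldots, X_{R-1}$. Hence the generator's natural order-$R$ dependence---``the next emitted symbol depends only on the previous $R$''---is precisely $\Pr(X_{-1}|X_{0:}) = \Pr(X_{-1}|X_{0:R})$, i.e. $\Process$ is \ORDER\MOrder{} reverse-Markov. To make this rigorous I would pass to the reverse generator's local-time variables $\widetilde{X}$, write its forward conditional $\Pr(\widetilde{X}_0|\widetilde{X}_{:0})$, and convert back using the index-reversal identity $\Pr(\widetilde{X}_{0:L}=w) = \Pr(X_{0:L}=\widetilde{w})$ together with stationarity, verifying that the truncation to the last $R$ local-time symbols corresponds to the first $R$ future symbols $X_{0:R}$.

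With both correspondences in hand, the proof closes at once: by Theorem~\ref{thm:MarkovReversible}, $\Process$ is \ORDER\MOrder{} Markov if and only if it is \ORDER\MOrder{} reverse-Markov, so $\ForwardEM$ is \ORDER\MOrder{} Markov if and only if the reverse generator is \ORDER\MOrder{} reverse-Markov. I expect the main obstacle to be purely the bookkeeping in the third step---keeping the tilde local-time indices of the reverse generator aligned with the global process indices and absorbing the resulting one-step shift by stationarity---since all of the genuine analytic content is already carried by the previously proven symmetry of the Markov order.
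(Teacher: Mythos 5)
Your step~2 translates the wrong property, and this is where the argument goes astray. The reverse generator being \ORDER\MOrder{} \emph{reverse}-Markov is a statement about the retrodictive conditional in the generator's local time, $\Pr(\widetilde{X}_{-1}\,|\,\widetilde{X}_{0:L})$, not about the predictive conditional $\Pr(\widetilde{X}_0\,|\,\widetilde{X}_{:0})$ that you propose to analyze (``the next emitted symbol depends only on the previous $R$'' is the generator being \ORDER\MOrder{} Markov, not reverse-Markov). Under the index reversal $\widetilde{X}_i \leftrightarrow X_{-i}$ the retrodictive conditional of the tilde process becomes the \emph{predictive} conditional of $\Process$: $\Pr(\widetilde{X}_{-1}\,|\,\widetilde{X}_{0:L}) = \Pr(X_1\,|\,X_{-L+1:1})$. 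Hence the correct identification is ``reverse generator \ORDER\MOrder{} reverse-Markov $\iff$ $\Process$ \ORDER\MOrder{} Markov,'' and the lemma then follows by pure index bookkeeping with no appeal to Thm.~\ref{thm:MarkovReversible} at all---which is exactly the paper's proof: it writes $\Pr(\widetilde{X}_{-1}\,|\,\widetilde{X}_{0:L}=wuv) = \Pr(X_1\,|\,X_{-L+1:1}=\widetilde{v}\widetilde{u}\widetilde{w})$, applies the forward Markov property of $M^+$, and converts back.

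The identification you assert instead, ``reverse generator \ORDER\MOrder{} reverse-Markov $\iff$ $\Process$ \ORDER\MOrder{} reverse-Markov,'' is what one obtains for the reverse generator being \ORDER\MOrder{} \emph{Markov}; carried out as written, your three-step chain proves the subsequent Corollary (forward generator Markov iff reverse generator Markov), not Lem.~\ref{lem:MarkovRevMarkov}. Your step-2 biconditional does happen to be true, but only by routing through Thm.~\ref{thm:MarkovReversible} a second time; it is not the definitional translation you claim, so the justification as given does not stand. The fix is to swap which local-time conditional you analyze: once you do, the invocation of Thm.~\ref{thm:MarkovReversible} becomes unnecessary and you recover the paper's direct argument. (The Theorem is genuinely needed only one step later, to pass from the reverse generator being reverse-Markov to its being Markov.)
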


\begin{proof}
This follows directly from the definition of the reverse process. Assume
$M^+$ has Markov order \MOrder{}. Let $|u|=L-2R$ and $|w|=|v|=R$. Then,
\begin{align*}
   \Pr(\widetilde{X}_{-1} | \widetilde{X}_{0:L}=wuv )
    & = \Pr(X_1 | X_{-L+1:1} = \widetilde{v}\widetilde{u}\widetilde{w}) \\
    & = \Pr(X_1 | X_{-R+1:1} = \widetilde{w}) \\
    & = \Pr(\widetilde{X}_{-1} | \widetilde{X}_{0:R} = w)
	~. \qedhere
\end{align*}
\end{proof}

With this interpretation, it is a short step to see that the Markov order
is reversible.

\begin{Cor}
The forward generator is \ORDER\MOrder{} Markov if and only
the reverse generator is \ORDER\MOrder{} Markov.
\end{Cor}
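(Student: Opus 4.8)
The plan is to obtain this Corollary with no new computation, simply by composing the two results already established in this appendix. The key observation is that ``order-$R$ Markov'' and ``order-$R$ reverse-Markov'' are properties of a process distribution, and that each generator---forward or reverse---carries its own stationary process in local time, to which both Theorem~\ref{thm:MarkovReversible} and Lemma~\ref{lem:MarkovRevMarkov} apply verbatim. So the entire argument is a two-step chain of equivalences.

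First I would invoke Lemma~\ref{lem:MarkovRevMarkov}, which asserts that the forward generator $M^+$ is order-$R$ Markov if and only if the reverse generator is order-$R$ reverse-Markov. This converts the left-hand condition of the Corollary into a statement about the reverse-Markovity of the reverse generator, bridging the two time frames through the index-flip identity used in that lemma's proof (writing $\widetilde{X}_{-1}\mid\widetilde{X}_{0:L}$ in terms of $X_1\mid X_{-L+1:1}$).

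Next I would apply Theorem~\ref{thm:MarkovReversible} to the stationary process that the reverse generator produces in its \emph{own} local time. Because the Theorem is stated for an arbitrary stationary process, it applies to this one without modification, yielding that the reverse generator is order-$R$ reverse-Markov if and only if it is order-$R$ Markov. Chaining this with the equivalence from the Lemma gives
\begin{align*}
  \text{forward gen.\ order-}R\text{ Markov}
    &\iff \text{reverse gen.\ order-}R\text{ reverse-Markov} \\
    &\iff \text{reverse gen.\ order-}R\text{ Markov},
\end{align*}
which is precisely the claim.

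I do not expect a genuine obstacle. The only point requiring a moment's care is the legitimacy of applying Theorem~\ref{thm:MarkovReversible} not to the originally given process but to the time-reversed process seen by the reverse generator; this is justified precisely because the Theorem is universal over stationary processes and the reverse generator's local-time process is itself stationary. Everything else is a mechanical transcription of the already-proven equivalences, so the proof is essentially two lines once the correct process is identified at each step.
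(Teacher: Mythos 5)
Your proof is correct and matches the paper's: the paper's own proof is exactly the composition of Theorem~\ref{thm:MarkovReversible} with Lemma~\ref{lem:MarkovRevMarkov}, which is what you do (the order in which the two equivalences are chained is immaterial). Your added remark about why the Theorem legitimately applies to the reverse generator's own stationary local-time process is a point the paper leaves implicit, but it is the same argument.
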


\begin{proof}
Apply Thm.~\ref{thm:MarkovReversible} and then
Lem.~\ref{lem:MarkovRevMarkov}.
\end{proof}

\section{The Explosive Example Revisited}
\label{app:infiniteeM}

In Sec.~\ref{sec:FiniteToInfinite}, we examined a causally irreversible
process whose forward \eM\ had two causal states, while its reverse \eM\ had
a countable infinity of causal states. Here, we provide details for
calculating this reverse \eM\ from the forward \eM. We give expressions
for the excess entropy and statistical complexities. A detailed analysis of
the various kinds of causal states---recurrent, transient, and elusive---for
the forward and reverse \eMs\ appears in Fig. \ref{fig:infiniteeMfull} and
gives some insight into the origins of the reverse \eM's infinite number of
causal states.

The forward and reverse \eMs\ are shown in Fig.~\ref{fig:infiniteeMfull}.
The entropy rate, since it is reversible \cite{Crut98d}, is easier to
calculate from $M^+$. This is given directly:
\begin{align}
  \hmu &= H[X_0 | \ForwardCausalState_0] \nonumber\\
       &= \frac{3}{5} \log_2 3 + \frac{2}{5}\\
       &\approx 1.350\ 955\ 500\ 432 ~. \nonumber
\end{align}
The forward statistical complexity is:
\begin{align}
  \ForwardCmu &= \frac{3}{5} \log_2 \frac{5}{3}
               + \frac{2}{5} \log_2 \frac{5}{2}\\
              &\approx 0.970\ 950\ 594\ 455 ~. \nonumber
\end{align}
For $n \geq 0$, the mixed-state operator \cite{Crut08b} acting on $\widetilde{M}^\pm$ gives:
\begin{align*}
  \Pr(\ForwardCausalState_0 | \ReverseCausalState_0 = A_{n-1}^-)
  &= \biggl( \frac{3 \cdot 2^n}{3 \cdot 2^k + 2 \cdot 3^n},
             \frac{2 \cdot 3^n}{3 \cdot 2^n + 2 \cdot 3^n} \biggr)\\
\intertext{and}
  \Pr(\ForwardCausalState_0 | \ReverseCausalState_0 = B^-) &= (1,0) ~.
\end{align*}
As it turns out, these distributions are also the mixed states for the
transient causal states of $M^+$ in the basis of its recurrent states.
That is:
\begin{align*}
  \Pr(\ForwardCausalState_0 | \ForwardCausalState_0 = D_n^+)
  &=
  \Pr(\ForwardCausalState_0 | \ReverseCausalState_0 = A_{n-1}^-) ~.
  \end{align*}
To determine $\pi(\ReverseCausalState)$ we solve the following
simultaneous equations:
\begin{align*}
  \pi(B^-) &= \frac{1}{3} \pi(B^-) + \sum_{n=0}^\infty c_n \pi(A_n^-) \\
  \pi(A_0^-)
   &= \frac{2}{3} \pi(B^-) + \sum_{n=0}^\infty a_n \pi(A_n^-) \\
  \pi(A_n^-) &= b_{n-1} \pi(A_{n-1}^-) \qquad n > 0 ~.
\end{align*}
Beginning with the third, we have:
\begin{align*}
  \pi(A_n^-) &= b_{n-1} \pi(A_{n-1}^-)\\
             &= \left(\,\prod_{n=0}^{n-1} b_n \right) \pi(A_0^-) \\
             &= \left(\frac{1 + \frac{2}{3}^n}{2^{n+1}}\right) \pi(A_0^-) ~,
\end{align*}
for $n > 0$.  Then, solving for $\pi(B^-)$, gives:
\begin{align*}
  \pi(B^-) &= \frac{3}{2} \sum_{n=0}^\infty c_n \pi(A_n^-) ~.
\end{align*}
So,
\begin{align*}
  \pi(B^-) &= \frac{3}{4} \pi(A_0^-) ~.
\end{align*}
The normalization constraint becomes:
\begin{align*}
  1 &= \pi(B^-) + \sum_{n=0}^\infty \pi(A_n^-) \\
    &= \frac{3}{4}\pi(A_0^-) + \frac{7}{4} \pi(A_0^-) ~.
\end{align*}
Thus,
\begin{align*}
  \pi(A_n^-) &= \frac{1+\left(\frac{2}{3}\right)^n}{5\cdot 2^n}\\
  \pi(B^-)   &= \frac{3}{10} ~.
\end{align*}
Collecting these together, we find:
\begin{align*}
\ReverseCmu &= \frac{3}{10} \log_2 \frac{10}{3}
            - \sum_{n=0}^\infty
              \left(\frac{1+\left(\frac{2}{3}\right)^n}{5\cdot 2^n}\right)
              \log_2
              \left(\frac{1+\left(\frac{2}{3}\right)^n}{5\cdot 2^n}\right)\\
            &\approx 1.588\ 621\ 621\ 714 ~.
\end{align*}
Finally,
\begin{align*}
\EE &= \ForwardCmu - H[\ForwardCausalState|\PastCausalState]\\
    &= \ForwardCmu - \sum_{n=0}^\infty
                     \left(\frac{1+\left(\frac{2}{3}\right)^n}{5\cdot 2^n}\right)
                     H\left(\frac{3 \cdot 2^n}{3 \cdot 2^k + 2 \cdot 3^n}\right) \\
    &\approx 0.304\ 159\ 734\ 344 ~,
\end{align*}
where $H(\cdot)$ is the binary entropy function.

\begin{figure*}
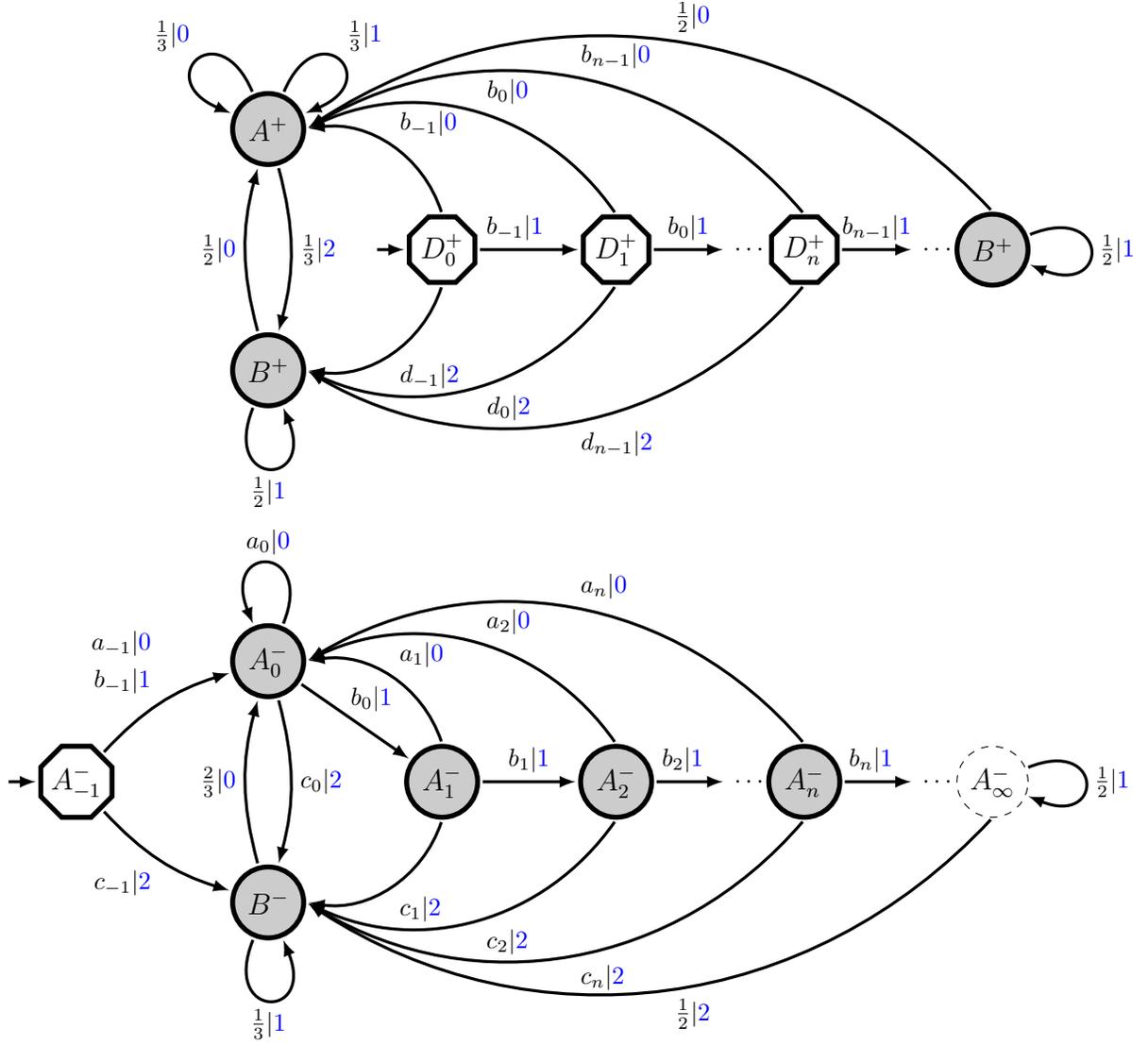


  \ifgenerate
    \ifcache
      \tikzsetnextfilename{explosiveFull}
    \fi
    \input{\tikzpathexplosiveFull.tikz}
  \else
    \includegraphics{\tikzpathexternalexplosiveFull}
  \fi

\caption{The forward \eM\ $M^+$ (top) has only two recurrent (shaded) causal
  states $A^+$ and $B^+$. The reverse \eM\ $M^-$ (bottom) has an infinite
  number of recurrent causal states. Transition labels in both machines make
  use of:
  $a_n = 2^{n+1}(3z_n)^{-1}$,
  $b_n = 1 - (a_n + c_n)$,
  $c_n = 3^n (2z_n)^{-1}$,
  $d_n = 1 - 2b_n$, and
  $z_n = 2^n + 3^n$.
  The dashed state labeled $A_\infty^-$ is an \emph{elusive} causal
  state~\cite{Uppe97a}: It is infinitely preceded, but neither reachable nor
  recurrent.  The hexagon-shaped states are strictly transient states and only
  induced by finite-length histories. Note, the limit of the $D_n^+$ states is
  $D_\infty^+ = B^+$ and it was drawn separately only to demonstrate the trend.
  }
\label{fig:infiniteeMfull}
\end{figure*}

\end{document}